\documentclass[
final, nomarks
]{dmtcs-episciences}


\usepackage[utf8]{inputenc}
\usepackage{subfigure}

%

\usepackage[round]{natbib}

\usepackage{graphicx}
\usepackage{amsmath, amssymb, amsfonts, amstext, mathdots, latexsym}
\usepackage{algorithm, algorithmic}
\usepackage{caption, subcaption}
\usepackage{multirow, slashbox, rotating}
\usepackage{tipa, arcs, fix-cm}
\usepackage{enumerate, setspace, makeidx, epsfig, verbatim}

\newcommand{\ignore}[1]{}

\newcommand{\norm}[1]{\left\lVert#1\right\rVert}

\newtheorem{theorem}{Theorem}[section]

\newtheorem{proposition}[theorem]{Proposition}
\newtheorem{lemma}[theorem]{Lemma}
\newtheorem{corollary}[theorem]{Corollary}

\newtheorem{observation}[theorem]{Observation}

\newtheorem{definition}[theorem]{Definition}

\author[Konstantinos Georgiou et al.]{Konstantinos Georgiou\affiliationmark{1}
  \and Somnath Kundu\affiliationmark{1}
  \and Pawe\l{} Pra\l{}at\affiliationmark{1}}
\title[The Fagnano Triangle Patrolling Problem]{The Fagnano Triangle Patrolling Problem\thanks{Research supported in part by NSERC.}
\thanks{
This is the full version of the paper by~\cite{KSP23-fagnano} with the same title which appeared in the proceedings of the 25th International Symposium on Stabilization, Safety, and Security of Distributed Systems  (SSS’23), October 2-4, 2023, in New Jersey, USA.
}
}
\affiliation{
  Toronto Metropolitan University, Toronto, Canada}
\keywords{
Patrolling,
Triangle,
Fagnano Orbits,
Billiard Trajectories}

\begin{document}
\publicationdata
{vol. 27:3}
{2025}
{4}
{10.46298/dmtcs.13421}
{2024-04-17; 2024-04-17; 2025-02-10; 2025-07-01}
{2025-07-18}

\maketitle
\begin{abstract}
We investigate a combinatorial optimization problem that involves patrolling the edges of an acute triangle using a unit-speed agent. 
The goal is to minimize the maximum (1-gap) idle time of any edge, which is defined as the time gap between consecutive visits to that edge. 
This problem has roots in a centuries-old optimization problem posed by Fagnano in 1775, who sought to determine the inscribed triangle of an acute triangle with the minimum perimeter. 
It is well-known that the orthic triangle, giving rise to a periodic and cyclic trajectory obeying the laws of geometric optics, is the optimal solution to Fagnano's problem. 
Such trajectories are known as Fagnano orbits, or more generally as billiard trajectories. 
We demonstrate that the orthic triangle is also an optimal solution to the patrolling problem.

Our main contributions pertain to new connections between billiard trajectories and optimal patrolling schedules in combinatorial optimization. 
In particular, as an artifact of our arguments towards proving optimality of our results, we introduce a novel 2-gap patrolling problem that seeks to minimize the visitation time of objects every three visits. 
We prove that there exist infinitely many well-structured billiard-type optimal trajectories for this problem, including the orthic trajectory, which has the special property of minimizing the visitation time gap between any two consecutively visited edges. Complementary to that, we also examine the cost of dynamic, sub-optimal trajectories to the 1-gap patrolling optimization problem. These trajectories result from a greedy algorithm and can be implemented by a  computationally primitive mobile agent.
\end{abstract}

\section{Introduction}
\label{sec: intro}

Patrolling refers to the perpetual monitoring, protection, and supervision of a domain or its perimeter using mobile agents. 
In a typical patrolling problem involving one mobile agent, the agent must move through a given domain in order to monitor or check specific locations or objects. The objective is to find a trajectory that satisfies certain constraints and/or that addresses quantitative objectives, such as minimizing the total distance traveled or maximizing the frequency of visits to certain areas.
The purpose of patrolling could be to detect any intrusion attempts, monitor for possible faults or to identify and rescue individuals or objects in a disaster environment,  
and for this reason, such problems arise in a variety of real-world applications, such as security patrol routes, autonomous robot navigation, and wildlife monitoring.
Overall the subject of patrolling has seen a growing number of applications in Computer Science, including Infrastructure Security, Computer Games, perpetual domain-surveying, and monitoring in 1D and 2D geometric domains.

In addition to its practical applications, patrolling has emerged (not as a combinatorial  optimization problem) in the context of theoretical physics. In particular, the problem of finding periodic trajectories in billiard systems has been a topic of interest for many years. A billiard system is a model of a particle or a waveform moving inside a domain (typically polygonal, but also elliptical, convex, or even non-convex region) and reflecting off its boundaries according to the laws of elastic collision which naturally translate to geometric conditions. For this reasonthe problem of finding periodic trajectories in a billiard system is equivalent to finding a closed path in the domain that satisfies these geometric conditions. 

One important example of a periodic trajectory in billiard systems is the so-called Fagnano orbit on acute triangles, a periodic, closed (and piece-wise linear) curve that visits the three edges of an acute triangle. Fagnano orbits, named after the Italian mathematician Giulio Fagnano who first studied them in the mid-18th century, arise as solutions to the optimization problem which asks for the shortest such curve. 
In this work we explore further connections between billiard trajectories and patrolling as a combinatorial optimization problem. 
In particular, we are asking what are the patrolling strategies for the edges of an acute triangle that optimize standard frequency-related objectives. 
Our findings demonstrate that a family of Fagnano orbits are optimal solutions to the corresponding combinatorial optimization problems. This enriches the connection between combinatorial patrolling and billiard trajectories, a relationship that has also been explored in prior work, e.g., in~\citep{Damaschke21}, where certain discrete and combinatorial patrolling strategies in a transformed discrete configuration space 
exhibit reflective path patterns that resemble billiard dynamics. 
Our result further supports this connection in continuous two-dimensional geometric domains.

Billiard trajectories are characterized by piecewise linear motion, obeying the laws of geometric optics upon reaching the boundaries of the domain. The fact that trajectories remain linear, except at the boundaries, is due to the uniform speed in the medium. Similarly, straight-line motion ensures minimal travel time, as speed is assumed to be constant. The reflection rule, which follows the laws of geometric optics, is a local property that applies only when the trajectory reaches a boundary. This guarantees that at each reflection point, and locally, movement occurs in the shortest possible time.
As a result, a closed billiard trajectory corresponds to the shortest travel path that visits a specific sequence of reflection points. However, this does not directly imply that the same billiard trajectory is optimal for patrolling the associated boundary segments. This is because one must first show that the optimal patrolling strategy must be periodic and cyclic.
In this regard, our contribution clarifies the connection (at least for billiard trajectories in triangles) by introducing a new patrolling problem in which efficiency is measured not by consecutive segment visits but by the time gaps between every three visits.

\section{Related Work}
\label{sec: related work}

Patrolling problems are a fundamental class of problems in computational geometry, combinatorial optimization, and robotics that have attracted significant research interest in recent years.
Due to their practical applications, they have received extensive treatment in the realm of robotics, see for example~\citep{ARSTMCC04,C04,EAK09,ESK08,HK08,MRZD02,YWB03}, as well as surveys~\citep{basilico2022recent,HuangZHH19,PortugalR11}. 
When patrolling is seen as part of infrastructure security, it leads to a number of optimization problems~\citep{KranakisK15}, with one particular example being the identification of network failures or web pages requiring indexing~\citep{MRZD02}. 

Combinatorial trade-offs of triangle edge visitation costs have been explored by~\cite{GeorgiouKP21}. 
In contrast, the current work pertains to the cost associated with the perpetual monitoring of the triangle edges by a single unit speed agent. 
Numerous variations of similar patrolling problems have been explored in computational geometry, which vary depending on the application domain, patrolling specifications, agent restrictions, and computational abilities.
Many efficient algorithms have been developed for several of these variants, utilizing a range of techniques from graph theory, computational geometry, and optimization, see survey by \cite{czyzowicz2019patrolling} for some recent developments. 
Some examples of studied domains include the bounded line segment in~\citep{KawamuraK15},
networks in~\citep{YWB03},
polygonal regions in~\citep{TanJ14}, trees  in~\citep{CzyzowiczKKT16},
disconnected fragments of one dimensional curves in~\citep{CollinsCGKKKMP13},
arbitrary polygonal environments in~\citep{PasqualettiFB10} (with a reduction to graphs), 
or even 3-dimensional environments in~\citep{FredaGPGDSC19}.

Identifying optimal patrolling strategies can be computationally hard as shown by~\cite{Damaschke20}, 
while even in seemingly easy setups the optimal trajectories can be counter-intuitive, see~\citep{kawamura2020simple}. 
The addition of combinatorial specifications has given rise to multiple intriguing variations, including 
the requirement of uneven coverage by~\cite{ChuangpishitCGG18,PiciarelliF20} or waiting times by~\cite{Damaschke21},
the presence of high-priority segments by~\cite{Ponce20}, 
and patrolling with distinct speed agents by~\cite{CzyzowiczGKMP16a}. 
Patrolling has also been studied extensively, e.g. by~\cite{LeeFeketeM16}, from the perspective of distributed computing, 
while the class of these problems also admit a game-theoretic interpretation between an intruder and a surveillance agent, see~\citep{AlpernMP11,Garrec19}.

Maybe not surprisingly, the optimal patrolling trajectories that we derive are in fact billiard-type trajectories, 
that is, periodic and cyclic trajectories obeying the standard law of geometric optics,
and which are referred to as Fagnano orbits specifically when the underlying billiard/domain is triangular. 
Fagnano orbits have been studied extensively both experimentally by~\cite{lafargue2014localized} and theoretically by~\cite{Troubetzkoy09}.
Billiard-type trajectories have been explored in equilateral triangles by~\cite{baxter2008periodic},
obtuse triangles by~\cite{Halbeisen-Hungerbuhler00}, as well as polygons by~\cite{vorobets1992periodic}.
More recently, there have been studies on ellipses by~\cite{Garcia19} and general convex bodies by~\cite{karasev2009periodic}, 
or even fractals by~\cite{LapidusN2011arxiv} and polyhedra by~\cite{bedaride2011periodic}, with the list of domains or trajectory specifications still growing. 

\section{Main Definitions and Results}
\label{sec: main definitions and results}

A patrolling schedule $S$ (or simply a schedule) for triangle $\Delta$ with edges (line segments) $E=\{\alpha,\beta,\gamma\}$ is an infinite sequence $\{s_i\}_{i\geq0}$, where each $s_i$ is a point on a line segment of $E$ that we also denote by $e(s_i)$, i.e. $e(s_i)\in E$ for each $i\geq 0$. For notational convenience, we will also use symbols the $\alpha,\beta,\gamma$ to denote the lengths of the corresponding edges. 
Similarly, $s_is_j$ will denote the segment with endpoints $s_i,s_j$ as well as its length. 
When $e(s_i)=\delta\in E$ we say that segment $\delta$ and point $s_i$ are visited at step $i$ of the schedule.
We will only be studying \textit{feasible schedules}, i.e. schedules for which eventually all segments in $E$ are visited and infinitely often. That is, we require for all $\delta \in E$, and all $k\in \integers$, that there exists $l\geq k$ with $e(s_l)=\delta$.

For simplicity, our notation above is tailored to points $s_i$ that are not vertices of $\Delta$. When $s_i$ is a vertex of $\Delta$ we assume that both incident edges are visited.
We also think of schedule $S$ as the trajectory of a unit speed agent, and hence we refer to the time between the visitation of $s_j,s_{j+\ell}$ as the sum of the lengths of segments $s_{j+i}s_{j+i+1}$ 
over $i \in \{0,\ldots, \ell-1\}$. 

A schedule $S$ is called: \\
- \textit{cyclic} if $\{e(s_0), e(s_1), e(s_2)\}=E$ and $e(s_{i+3})=e(s_{i})$, for every $i\geq0$, and \\ 
- \textit{$k$-periodic} (for $k\geq 3$) if $s_{i+k}=s_{i}$,
for every $i\geq 0$. 

For any segment $\delta \in E$ we define its \textit{$t$-gap sequence}, $g^t(\delta)$, that records the visitation time gaps of $\delta$ over every $t+1$ consecutive visitations. In particular, $t=1$ corresponds to the standard \emph{idle time} considered previously, and that measures the additional time it takes for each object to be revisited, after each visitation. 
Formally, 
let $e(s_j)=e(s_{j'})=\delta$ and
suppose that points $s_j,s_{j'}$ are the $k$-th and $(k+t)$-th visitation of $\delta$, respectively. Then the time between the visitations of $s_j,s_{j'}$ is exactly the value of $k$-th element of sequence $g^t(\delta)$. From this definition, it is also immediate that 
$\left(g^t(\delta)\right)_i =\sum_{j=i}^{i+t-1}\left(g^1(\delta)\right)_j$. 

The \textit{$t$-gap} $G^t(\delta)$ of $\delta \in E$ is defined as 
$\sup_i \left(g^t(\delta)\right)_i$, 
while the \textit{$t$-gap} $G^t$ of schedule $S$ for edges $E$ (hence for input triangle $\Delta$) is defined as $\max_{\delta \in E}G^t(\delta)$.
When it is clear from the context, we will abbreviate $G^1$ simply by $G$.

\subsection{Main Contributions and More Terminology}
\label{sec: main contributions}

In this section we summarize our main contributions, pertaining to optimal $1$-gap and $2$-gap patrolling schedules of acute triangles. 
We begin with a self-contained proof of the optimality of $1$-gap patrolling schedules, restricted to cyclic and $3$-periodic schedules.  
We present this proof as a warm-up and a reference for the proof of Lemma~\ref{lem: cost of orthic}, which provides a closed formula for the 1-gap of the optimal cyclic and 3-periodic schedule. 
To present our result, we recall the so-called \textit{orthic triangle}, a pedal-type triangle associated with an acute triangle $\Delta$. This triangle is inscribed in $\Delta$, with its vertices given by the projections of $\Delta$'s orthocenter (the intersection of its altitudes) onto its three edges.  
Note also that any $3$-periodic cyclic schedule corresponds to a triangle inscribed in $\Delta$. The following theorem, attributed to Fagnano in 1775, is proved in Section~\ref{sec: optimal cyclic 1-periodic}, where we also introduce key concepts that support our main contributions.

\begin{theorem}[Fagnano's Theorem]
\label{thm: optimal 1-gap cyclic 3-periodic}
The optimal $1$-gap $3$-periodic cyclic patrolling schedule of a triangle $\Delta$ is its orthic triangle.  
\end{theorem}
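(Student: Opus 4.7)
The plan is to reduce the theorem to the classical geometric fact that the orthic triangle has smallest perimeter among all triangles inscribed in an acute triangle. First I would unpack the definition: a $3$-periodic cyclic schedule is fully determined by the triple $(s_0, s_1, s_2)$ of initial points, with exactly one point on each edge of $\Delta$, since $s_{i+3}=s_i$ for all $i\geq 0$. Such a triple specifies an inscribed triangle of $\Delta$, and the agent's trajectory is the closed polygonal path that repeatedly traverses this inscribed triangle in the same orientation.

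Next I would compute $G^1$ explicitly. For every fixed edge $\delta\in E$, consecutive visits to $\delta$ occur exactly three steps apart in the schedule and at the same point (by $3$-periodicity), and the time elapsed between them equals
\[
|s_0 s_1|+|s_1 s_2|+|s_2 s_0|,
\]
i.e. the perimeter of the inscribed triangle $s_0 s_1 s_2$. Since this quantity does not depend on $\delta$, we have $G^1(\delta) = G^1 = $ perimeter of the inscribed triangle. Hence minimizing $G^1$ over all $3$-periodic cyclic schedules is precisely the problem of minimizing the perimeter of a triangle inscribed in $\Delta$.

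Finally I would invoke (or reprove) the classical theorem of Fagnano that the minimum-perimeter triangle inscribed in an acute triangle is the orthic triangle. The standard route is the reflection/unfolding argument: reflect $\Delta$ successively across a sequence of its edges so that the closed inscribed path unfolds into a single straight segment joining two copies of the starting point. The unfolded segment, and hence the perimeter, is minimized exactly when the path satisfies the angle-of-incidence equals angle-of-reflection rule at every edge, and a direct angle chase at the feet of the altitudes identifies this unique billiard-type path with the orthic triangle. Together with the reduction above, this yields optimality of the orthic triangle among $3$-periodic cyclic $1$-gap schedules.

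The main obstacle is the reflection step and, in particular, establishing uniqueness. Acuteness of $\Delta$ is essential, because it guarantees that the feet of the altitudes lie strictly inside the corresponding edges so that the orthic triangle is a genuine inscribed triangle whose vertices can serve as legitimate reflection points; in the right-angled or obtuse case the orthocenter leaves the triangle and the argument degenerates. Since the paper promises that this proof serves as a warm-up and template for Lemma~\ref{lem: cost of orthic}, I would organize the unfolding carefully so that it also produces a closed-form expression for the orthic perimeter in terms of the side lengths and angles of $\Delta$, which can then be reused verbatim in the subsequent lemma.
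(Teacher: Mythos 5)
Your reduction of the patrolling statement to the classical minimal-inscribed-perimeter problem is correct and matches what the paper asserts (a $3$-periodic cyclic schedule is an inscribed triangle traversed repeatedly, and every edge's $1$-gap equals its perimeter), but the geometric core of your argument follows a genuinely different route. The paper gives a Fej\'er-style proof: it fixes a point $L$ on $AC$, reflects $L$ (not the triangle) about the two other edges to get $L_1,L_2$, observes that the optimal $K,M$ are the intersections of the segment $L_1L_2$ with those edges so that the perimeter equals $\lVert L_1-L_2\rVert$, and then minimizes this length over the position of $L$ by writing $\lVert L_1-L_2\rVert^2$ as an explicit convex quadratic in a coordinate $x\in[0,1]$; the optimizer is then verified to be the orthic triangle by direct analytic-geometry checks. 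This computational route has the advantage of producing the closed form $2\alpha\sin(B)\sin(C)$ for the minimal perimeter, which is exactly what Lemma~\ref{lem: cost of orthic} reuses. Your proposal is instead the Schwarz-type unfolding: reflect the whole triangle across successive edges so the closed path straightens, and identify the billiard path with the orthic triangle by an angle chase. That is a valid classical proof, and it dovetails nicely with the reflected-triangles gadget the paper builds later in Section~\ref{sec: technical properties}, but be careful with one detail you gloss over: after unfolding a \emph{single} period (two reflections), the image of the starting edge is not parallel to the original, so the straight-line distance between the two copies of the starting point still depends on the starting point and you cannot yet compare different inscribed triangles. You need to unfold \emph{two} periods (five reflections), where the final copy of $BC$ becomes a translate of the original (this is precisely Lemma~\ref{lem: parallel after reflection}), so that the straight-segment lower bound is starting-point independent and is attained exactly by the billiard (orthic) path. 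You would also need a short additional computation to extract the explicit perimeter formula that the subsequent lemma requires, as you yourself note.
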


Towards our goal to derive optimal $1$-gap schedules, we find all (infinitely many) optimal $2$-gap cyclic schedules, which are in fact billiard-type trajectories.
We prove the next theorem in Section~\ref{sec: optimal 2-gap}.
\begin{theorem}
\label{thm: optimal 2-gap cyclic}
There are infinitely many optimal $2$-gap cyclic schedules of a triangle $\Delta$, that include also the orthic triangle. Every $2$-gap optimal schedule is $6$-periodic and has value equal to 2 times the perimeter of the orthic triangle. Moreover, each optimal schedule is made up of segments that are parallel to the edges of the orthic triangle. 
\end{theorem}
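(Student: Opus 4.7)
My plan is to apply the classical billiard unfolding to the 6-segment sub-trajectories that realize 2-gaps, combined with a convexity-based averaging. For a cyclic schedule $S$ with $e(s_0) = \alpha$, the 2-gap of $\alpha$ at the $j$-th visit equals the length $L_j$ of the 6-segment sub-path from $s_{3j} \in \alpha$ to $s_{3j+6} \in \alpha$ that bounces on edges $\beta, \gamma, \alpha, \beta, \gamma$ in that order. I would unfold $\Delta$ sequentially across the lines of these five edges to obtain a chain of reflected triangles in which the 6-segment path becomes a polygonal path of the same length, so that $L_j \geq |s_{3j} - \Phi(s_{3j+6})|$, where $\Phi$ is the composition of the five reflections. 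The first and crucial step is to verify that $\Phi$ is a glide reflection whose axis is parallel to the line containing $\alpha$, with glide magnitude $2 p_{\mathrm{orthic}} \cos A$ and axis-to-$\alpha$ distance $p_{\mathrm{orthic}} \sin A$, where $A$ is the angle at the vertex opposite $\alpha$. This structure is pinned down by the fact that the orthic Fagnano orbit traversed twice is a closed billiard trajectory of length exactly $2 p_{\mathrm{orthic}}$, together with the expectation of a 1-parameter family of parallel closed billiard trajectories (sliding the start along $\alpha$), which forces the glide axis to be parallel to $\alpha$.

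\textbf{Lower bound.} Using this structure and writing $w_j$ for the signed displacement along $\alpha$ from $s_{3j}$ to $s_{3j+6}$, a direct computation yields $|s_{3j} - \Phi(s_{3j+6})|^2 = (w_j + 2 p_{\mathrm{orthic}} \cos A)^2 + 4 p_{\mathrm{orthic}}^2 \sin^2 A$. Define $f(w) := \sqrt{(w + 2 p_{\mathrm{orthic}} \cos A)^2 + 4 p_{\mathrm{orthic}}^2 \sin^2 A}$, a convex function with $f(0) = 2 p_{\mathrm{orthic}}$. Applying Jensen's inequality to the pointwise bound $L_j \geq f(w_j)$ gives $\frac{1}{M}\sum_{j=0}^{M-1} L_j \geq f\left(\frac{1}{M}\sum_{j=0}^{M-1} w_j\right)$. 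The sum $\sum_{j=0}^{M-1} w_j$ telescopes (in the $\alpha$-coordinate) to $s_{3M} + s_{3(M+1)} - s_0 - s_3$, which is bounded since all points $s_{3k}$ lie on the segment $\alpha$; hence the average displacement tends to $0$ as $M \to \infty$, and the average of $L_j$ tends to $f(0) = 2 p_{\mathrm{orthic}}$. This proves $G^2 = \sup_j L_j \geq 2 p_{\mathrm{orthic}}$.

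\textbf{Equality and characterization.} The orthic triangle, viewed as a 3-periodic (hence 6-periodic) cyclic schedule, attains $L_j = 2 p_{\mathrm{orthic}}$ at every $j$, so the lower bound is sharp. Conversely, in an optimal $S$ one has $L_j = 2 p_{\mathrm{orthic}}$ for every $j$, which forces $(w_j + 2 p_{\mathrm{orthic}} \cos A)^2 = 4 p_{\mathrm{orthic}}^2 \cos^2 A$ and thus $w_j \in \{0,\ -4 p_{\mathrm{orthic}} \cos A\}$. I would then verify that the second root corresponds to an unfolded straight line that exits the valid chain of unfolded triangles for acute $\Delta$, ruling it out; hence $w_j = 0$ for all $j$, meaning $s_{3j+6} = s_{3j}$ and $S$ is 6-periodic. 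The family of optimal 6-periodic schedules is parametrized by the position of $s_0$ on $\alpha$, ranging over the open sub-segment for which the unfolded straight line of length $2 p_{\mathrm{orthic}}$ lies inside the unfolded chain; this yields infinitely many optima and recovers the orthic triangle at $s_0 = H_\alpha$. The parallelism property follows because the unfolded straight line has the fixed direction of the orthic edge $H_\alpha H_\beta$, and folding back through the five reflections produces exactly the three orthic-edge directions, each appearing twice among the six segments.

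\textbf{Main obstacle.} The primary technical hurdle is the geometric analysis of the five-reflection composition $\Phi$: both establishing its glide-reflection structure (axis parallel to $\alpha$ with glide magnitude $2 p_{\mathrm{orthic}} \cos A$), and verifying that the spurious root $w = -4 p_{\mathrm{orthic}} \cos A$ is geometrically infeasible for acute $\Delta$. I plan to address the former by explicit composition of the reflections using the identity $A + B + C = \pi$ and the characterization of $H_\alpha$ as the foot of the altitude from the opposite vertex, and the latter by comparing the width of the admissible unfolded billiard strip along $\alpha$ against the chord $4 p_{\mathrm{orthic}} \cos A$ using properties of the orthic triangle in the acute case.
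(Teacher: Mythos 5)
Your lower bound takes a genuinely different route from the paper's. The paper introduces a $2k$-limited patrolling problem (the shortest cyclic trajectory up to the $2k$-th visit of $BC$), observes that the optimal $2$-gap is at least $v_k/k$ (Observation~\ref{obs: lower bound to 2gap}), and computes $\lim_k v_k/k$ by unfolding $k$ times and noting that the shortest diagonal of the resulting long, thin parallelogram bounded by the orthic channel is asymptotic to $k$ times twice the orthic perimeter (Lemma~\ref{lem: limit of lower bound}). You replace this asymptotic argument by identifying the five-reflection composition $\Phi$ as a glide reflection with axis parallel to $\alpha$, writing $L_j\geq f(w_j)$ with $f$ convex and $f(0)=2p_{\mathrm{orthic}}$, and invoking Jensen together with the telescoping of the displacements $w_j$. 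This is sound: the glide structure can indeed be pinned down by explicit composition (it is the content of the paper's angle count $2B+3C+3A+B=3\pi$ in Lemma~\ref{lem: parallel after reflection}), the orthic edge at $K$ does make angle $A$ with $BC$, and the Jensen/telescoping step is a cleaner and more quantitative version of the paper's parallelogram limit. Your matching upper bound --- a one-parameter family of $6$-periodic schedules whose unfolding is a straight chord of length $2p_{\mathrm{orthic}}$ --- is exactly the paper's family of sub-orthic schedules inside the orthic channel (Lemmas~\ref{lem: 6periodic within channel} and~\ref{lem: the orthic channel}); you assert but do not verify that the admissible interval of starting points is open, whereas the paper determines its exact endpoints.

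The genuine gap is in your equality analysis. Optimality gives $\sup_j L_j=2p_{\mathrm{orthic}}$, hence only $L_j\leq 2p_{\mathrm{orthic}}$ for each $j$; and even where $L_j=2p_{\mathrm{orthic}}$, the chain $2p_{\mathrm{orthic}}=L_j\geq f(w_j)$ yields the inequality $(w_j+2p_{\mathrm{orthic}}\cos A)^2\leq 4p_{\mathrm{orthic}}^2\cos^2A$, i.e.\ $w_j\in[-4p_{\mathrm{orthic}}\cos A,\,0]$ --- an interval, not the two roots you list. You therefore cannot conclude $w_j=0$ for all $j$; your own telescoping only gives $\sum_j|w_j|<\infty$, hence $w_j\to 0$. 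In fact the literal claim ``every optimal schedule is $6$-periodic and made of segments parallel to the orthic edges'' admits counterexamples: take the schedule $s_0,L,M,K,L,M,K,\dots$ with $s_0$ on $BC$ perturbed slightly from $K$ toward the foot of the perpendicular from $L$, so that $s_0L<KL$. All its $2$-gaps are at most $2p_{\mathrm{orthic}}$ (with equality attained), so it is optimal, yet it is not periodic and its first segment is not parallel to any orthic edge. You are in good company --- the paper's own proof establishes only the lower bound and the matching family of $6$-periodic optima, and does not actually prove the ``every optimal schedule'' clauses of the theorem statement --- but your proposed derivation of those clauses does not go through as written, and the characterization would at best hold in an eventual or steady-state sense.
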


Then in Section~\ref{sec: optimal cyclic} we derive our main contribution.
\begin{theorem}
\label{thm: optimal 1-gap cyclic}
The orthic triangle of a triangle is an optimal $1$-gap patrolling schedule.
\end{theorem}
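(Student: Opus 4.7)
The plan is to derive the $1$-gap optimality of the orthic triangle from the $2$-gap characterization in Theorem~\ref{thm: optimal 2-gap cyclic}, via the elementary inequality $G^2 \le 2\,G^1$. Writing $P$ for the perimeter of the orthic triangle, orthic traversed as a $3$-periodic cyclic schedule achieves $G^1 = P$ (this is exactly the content of Lemma~\ref{lem: cost of orthic} together with Theorem~\ref{thm: optimal 1-gap cyclic 3-periodic}), so it suffices to establish the matching lower bound $G^1(S)\ge P$ for every feasible schedule $S$.

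The inequality $G^2(S)\le 2\,G^1(S)$ is immediate from the identity recorded in Section~\ref{sec: main definitions and results}: for every edge $\delta\in E$ and every index $i$,
\[
(g^2(\delta))_i \;=\; (g^1(\delta))_i + (g^1(\delta))_{i+1} \;\le\; 2\,G^1(\delta),
\]
and taking suprema over $\delta$ and $i$ yields the claim. Combining this with the $2$-gap lower bound $G^2(S)\ge 2P$ coming from Theorem~\ref{thm: optimal 2-gap cyclic} gives $G^1(S)\ge P$, matching what orthic achieves. This is the whole argument, modulo one scope issue discussed below.

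The main obstacle is the mismatch between the two results: Theorem~\ref{thm: optimal 2-gap cyclic} is phrased for \emph{cyclic} schedules, whereas Theorem~\ref{thm: optimal 1-gap cyclic} must hold for all feasible schedules, cyclic or not. I expect the cleanest way to bridge this is to verify that the lower-bound portion of the proof of Theorem~\ref{thm: optimal 2-gap cyclic} in Section~\ref{sec: optimal 2-gap} already yields $G^2(S)\ge 2P$ for arbitrary feasible $S$, with the cyclic hypothesis only invoked to extract the extra structural information (the equality cases, $6$-periodicity, and the parallelism of segments to the orthic edges). Should the lower bound not be in that form directly, the backup plan is a short structural reduction: at any step $i$ with $\{e(s_i),e(s_{i+1}),e(s_{i+2})\}\ne E$, some edge is skipped across three consecutive hops, and Fagnano's classical inscribed-perimeter bound combined with the triangle inequality should force a $1$-gap of at least $P$ on that skipped edge; this reduces the general case to the cyclic case handled above, completing the proof.
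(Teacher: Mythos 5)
Your overall architecture --- lower-bounding $G^1$ by $\tfrac12 G^2$ to handle cyclic schedules, then reducing arbitrary schedules to cyclic ones --- is exactly the paper's (Corollary~\ref{cor: optimal 1-gap cyclic} followed by Lemma~\ref{lem: there is optimal 1-gap that is cyclic}). However, both branches of your plan have a gap as written. Your primary hope, that the lower-bound part of Section~\ref{sec: optimal 2-gap} already gives $G^2(S)\ge 2P$ for arbitrary feasible $S$, is not borne out: Observation~\ref{obs: lower bound to 2gap} and Lemma~\ref{lem: limit of lower bound} are intrinsically about cyclic trajectories. The $2k$-limited problem is defined over cyclic trajectories, and the reflection-unfolding argument relies on the edge-visitation order being the fixed cyclic one, so that the unfolded trajectory becomes a path from $BC$ to $B_kC_k$ inside the strip of reflected triangles; a non-cyclic schedule unfolds into a different chain of reflections and no bound on its $2$-gap is established there.

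Your backup reduction is the right idea (and is what the paper actually does), but its key claim is false as stated. If edge $\delta$ is skipped over a window of three hops, the corresponding gap of $\delta$ is an \emph{open} path from a point of $\delta$ to a generally different point of $\delta$ touching the other edges in between; Fagnano's bound applies to \emph{closed} inscribed triangles, and a single open path of this type can be strictly shorter than $P$ (for a near-equilateral triangle, a path from near $B$ along $AB$, then to $C$ on $AC$, then back to $BC$, has length close to $\alpha < P = 2\alpha\sin B\sin C$). So "Fagnano plus the triangle inequality forces a $1$-gap of at least $P$ on that skipped edge" does not follow from a single skipped window. The paper's Lemma~\ref{lem: there is optimal 1-gap that is cyclic} instead exploits that a non-cyclic schedule exhibits a \emph{four-leg} gap $s_k\to s_{k+1}\to s_{k+2}\to s_{k+3}\to s_{k+\ell}$ of some edge $\alpha$, splits it into two two-leg halves, closes each half into a $3$-periodic cyclic schedule ($S'$ on $s_k,s_{k+1},s_{k+2}$ and $S''$ on $s_{k+2},s_{k+3},s_{k+\ell}$), bounds each closed perimeter by twice the corresponding half via the triangle inequality, and concludes $P\le\min\{G',G''\}\le 2\min\{\cdot,\cdot\}\le G(\alpha)\le G$. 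That two-candidate-cycles step is the substance of the proof and is missing from your proposal; without it (or an equivalent argument) the reduction does not go through.
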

In the same section, we quantify the $1$-gap of the orthic triangle and compare it to the optimal $2$-gap schedules. Specifically, we examine which of the optimal $2$-gap schedules minimizes the maximum time between visits to any two distinct edges of $\Delta$. We then prove that, in this multi-objective optimization problem, the orthic schedule remains optimal.

From our previous contributions, we conclude that a mobile agent whose task is to $1$-gap optimally patrol a triangle $\Delta$ needs the capability to compute the base points of $\Delta$'s altitudes. This naturally raises the question: can we obtain efficient solutions using a primitive agent?
In this work, we explore the primitive paradigm of a greedy-type algorithm, which constructs a solution by iteratively making a sequence of irrevocable choices, each locally optimal according to a given heuristic. In our case, the heuristic is chosen to greedily minimize the time until the next object visitation.
A fundamental algorithmic question is whether such simple heuristics can yield optimal or near-optimal solutions. We quantify the deviation of the resulting solution from optimality and, perhaps more importantly, demonstrate that the trajectory eventually converges to a periodic and cyclic pattern.
Indeed, in Section~\ref{sec: greedy}, we formalize this result and provide the necessary technical details.

\begin{theorem}
\label{thm: greedy}
There is a greedy-type schedule that converges to a~ $3$-periodic cyclic schedule whose $1$-gap is off from the $1$-gap optimal cyclic schedule by a factor 
$\gamma \in [1,\gamma_0]$, where $\gamma_0=\sqrt{2}/2+1/2$, and 
$\gamma$ admits a closed formula as  a function of the angles of the given triangle. 
\end{theorem}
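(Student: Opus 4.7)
The plan is to make the greedy rule precise, reduce its long-run dynamics to an affine iteration on edge parameters, invoke Banach's fixed-point theorem to conclude convergence to a unique $3$-periodic cyclic orbit, and finally compare the resulting $1$-gap to the orthic optimum from Theorem \ref{thm: optimal 1-gap cyclic}.

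First, I would formalize the greedy agent as moving at each step to the point on an edge that minimizes the time until the next visitation, together with a mild rule that prevents the agent from alternating forever between two edges (for instance, forbidding the immediate revisit of an edge, or equivalently picking as next target the edge with the currently largest idle time). Such a rule forces cyclic visitation in the long run, since any tail that starves the third edge can be locally improved in the greedy sense by inserting a perpendicular step to it. Once cyclic, the greedy step from $p\in e$ to the next edge $e'$ is the foot of the perpendicular from $p$ onto $e'$, as that is the unique nearest point of $e'$ to $p$.

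Next, I would parameterize the three edges $a,b,c$ by signed distances from chosen endpoints and let $T$ denote the composition of the three consecutive perpendicular projections $a\to b\to c\to a$. Each single projection across the shared angle $\theta$ at a vertex is an affine map in these parameters with slope $-\cos\theta$, so $T$ is affine with overall slope $-\cos A\cos B\cos C$, whose absolute value is strictly below $1$ because $\Delta$ is acute. Banach's fixed-point theorem then gives a unique fixed point $x^*$ and exponential convergence of the iterates, establishing convergence of the greedy schedule to a unique $3$-periodic cyclic schedule $p_a^\ast p_b^\ast p_c^\ast$. Solving $x=T(x)$ with the help of the projection identity $a=b\cos C+c\cos B$ and the angle identity $\cos A\cos B+\cos C=\sin A\sin B$ yields
\[
x^* \;=\; \frac{b\sin A\sin B}{1+\cos A\cos B\cos C},
\]
from which the three side lengths of the limit triangle, and hence the perimeter $P^*$ that equals the $1$-gap of the converged schedule, follow in closed form after substituting $a=2R\sin A$, $b=2R\sin B$, $c=2R\sin C$.

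Finally, combining $P^*$ with the orthic perimeter $P_{\mathrm{orth}}=4R\sin A\sin B\sin C$ computed in Lemma \ref{lem: cost of orthic} produces the claimed closed form for $\gamma(A,B,C):=P^*/P_{\mathrm{orth}}$ purely in terms of the three angles. I expect the main obstacle to be the uniform upper bound $\gamma\leq\gamma_0=(\sqrt{2}+1)/2$ over the open simplex of acute triangles. My plan here is to use symmetry and Lagrange multipliers to show that any interior critical point of $\gamma$ is the equilateral one (where $\gamma=2\sqrt{3}/3\approx 1.155$), so that the supremum must be attained on the boundary as one angle approaches $\pi/2$; along this boundary the problem collapses to a single-variable inequality that can be verified by elementary calculus and pins the extremal ratio at $\gamma_0$.
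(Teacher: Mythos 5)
Your proposal follows essentially the same route as the paper: the composition of the three perpendicular projections is an affine contraction with slope $-\cos A\cos B\cos C$, the iterates converge to its unique fixed point giving a $3$-periodic cyclic limit triangle similar to $ABC$ with ratio $\sin A\sin B\sin C/(1+\cos A\cos B\cos C)$, and the ratio to the orthic perimeter, $(\sin A+\sin B+\sin C)/(2(1+\cos A\cos B\cos C))$, is then maximized over the acute simplex at the right isosceles triangle (value $(1+\sqrt2)/2$) via symmetry plus boundary analysis. The only cosmetic differences are that the paper imposes the cyclic edge order in the definition of the greedy schedule rather than deriving it from an anti-starvation rule, and solves the degree-one linear recurrence explicitly instead of invoking Banach's fixed-point theorem.
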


The greedy schedule we will analyze will be given by a sequence of points on the triangle's edges, which will visit these edges cyclically. Therefore, convergence to the periodic schedule refers to the piecewise convergence of these points, on each edge, to a specific points with respect to the $\ell_1$ metric.
It will follow from our analysis that our greedy algorithm will be nearly optimal for any acute triangle with one arbitrarily small angle, and it will be the worst off from the optimal solution when the given triangle is a right isosceles.

\section{The $1$-Gap Optimal $3$-Periodic Cyclic Schedule}
\label{sec: optimal cyclic 1-periodic}

There are many proofs known for the fact that inscribed triangle with the shortest perimeter is its orthic triangle. In the language of triangle patrolling, the statement is equivalent to that that the optimal $1$-gap $3$-periodic cyclic schedule of a triangle is its orthic triangle, articulated in Theorem~\ref{thm: optimal 1-gap cyclic 3-periodic}. For completeness, we provide next a self-contained proof.

\begin{proof}[of Theorem~\ref{thm: optimal 1-gap cyclic 3-periodic}] 
We consider triangle $ABC$, as in Figure~\ref{fig: orthic is opt}. First we find the inscribed triangle of minimum perimeter, and then we show the optimizer is the orthic triangle. 

 \begin{figure}[h!]
\centering
  \includegraphics[width=9cm]{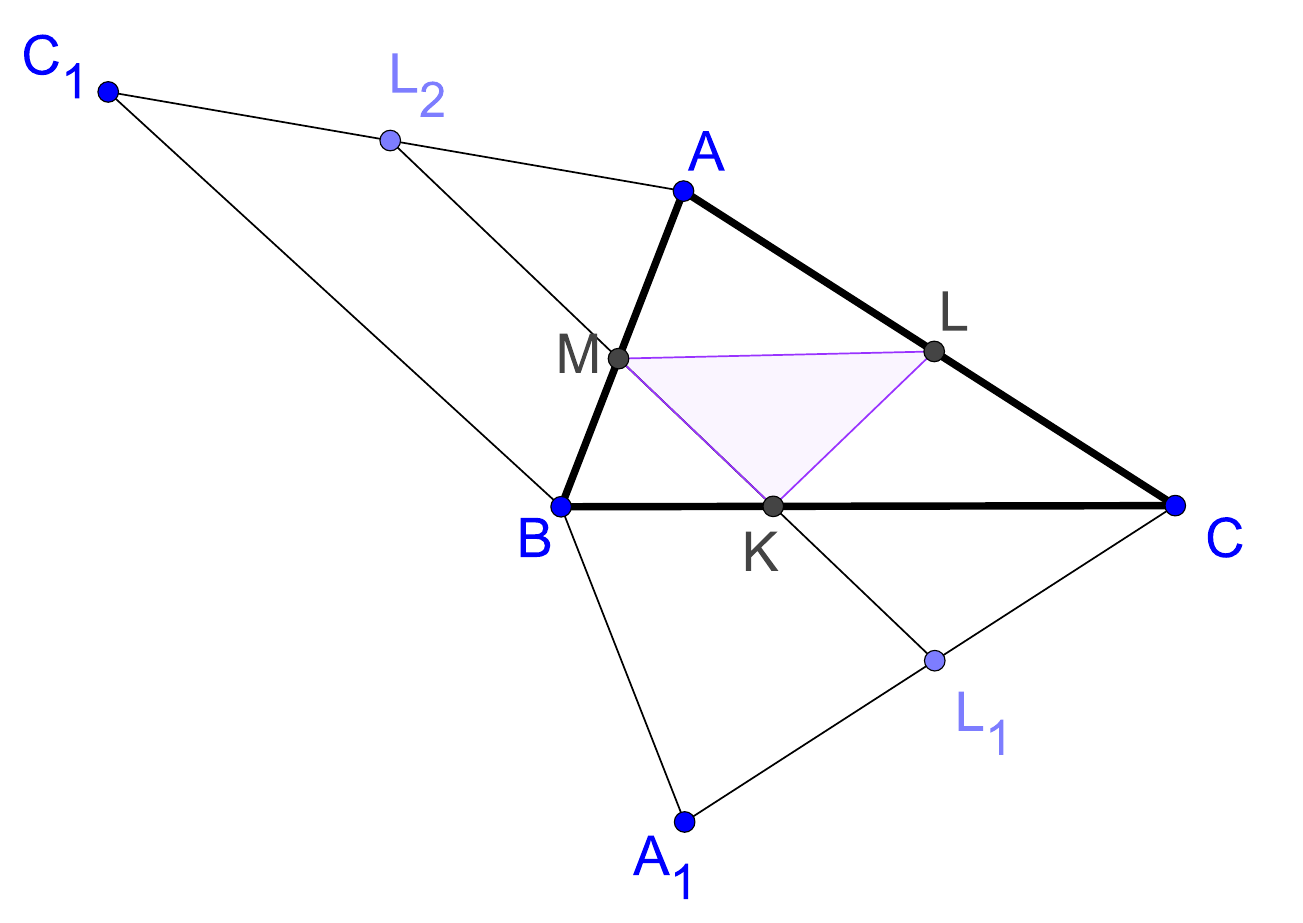}
\caption{Triangle figure supporting the proof of Fagnano's Theorem.}
\label{fig: orthic is opt}
\end{figure}

We start with an arbitrary point $L$ on $AC$, and we find the optimal points $K,M$ on $BC,AB$, respectively, so as to minimize the perimeter of $KLM$ as a function of $L$. Then, we show how to choose $L$ so as to minimize the perimeter.

Consider the reflection $A_1$ of $A$ about $BC$, the reflection $C_1$ of $C$ about $AB$, and the reflections $L_1,L_2$ of $L$ about $BC,AB$, respectively. Consider also the intersections $K,M$ of $L_1L_2$ with $BC,AB$, respectively. Clearly, the optimal way to start from $L$, visit edge $BC$, then $AB$ and then return to $L$ is by following the edges of triangle $KLM$. Moreover, the perimeter of $KLM$ equals the length of segment $L_1L_2$. Next we minimize the length of $L_1L_2$ as a function of $L$. 

In this direction, we consider a Cartesian system centered at $B$ where $A=(p,q), B=(0,0), C=(1,0)$ (in particular, w.l.o.g.\ we assume that $BC=1$). Note that $A_1=(p,-q), C_1=(\cos(2B), \sin(2B))$, where also 
\begin{equation}
\label{equa: A coordinates}
p=\frac{\cos(B)\sin(C)}{\sin(B+C)}, 
~~
q=\frac{\sin(B)\sin(C)}{\sin(B+C)}.
\end{equation}

Point $L$ is a convex combination of $A,C$, hence there exists $x\in [0,1]$ such that $L=xC+(1-x)A$. Therefore, 
$
L_1 = xC+(1-x)A_1, ~~L_2=xC_1+(1-x)A,
$
so that
\begin{align*}
\norm{L_1-L_2}^2
& = \norm{
x(C-C_1) + (1-x)(A_1-A)
}^2 
\\
& =
\norm{
x
\left(
\begin{array}{c}
\cos(2B)-1\\
\sin(2B)
\end{array}
\right)
+
(1-x)
\left(
\begin{array}{l}
0\\
2q
\end{array}
\right)
}^2. 
\end{align*}

It follows that $\norm{L_1-L_2}^2$ is convex in $x$ (degree 2 polynomial), and elementary calculations show that the minimum is attained at 
$
x_0
=\frac{\cos(A)\sin(C)}{\sin(B)}.
$
It can be seen that for all acute triangles, we have $x_0\in [0,1]$. Then, the minimum patrolling trajectory has length 
$$
\norm{
x_0(C-C_1) + (1-x_0)(A_1-A)
}
=
2 \sin(B) \sin(C).
$$

Note that the choice of $x_0$ determines all points $K,L,M$. Now we show that $KLM$ is the orthic triangle. 
In order to show that $K$ is the base of the altitude corresponding to $A$, we verify that points $(p,0), L_1, L_2$ are collinear (and hence $K=(p,0)$). For this observe that $L_1,L_2$ are already expressed as a function of $p,q,x_0$, and hence the claim follows by straightforward calculations. 

Next we show that $L$ is the base of the altitude corresponding to $B$. For this we verify that $KL$ is perpendicular to $AC$. Indeed, 
$
L = x_0C+(1-x_0)A, ~~L_2=x_0C_1+(1-x_0)A = (x_0 +(1-x_0)p,(1-x_0)q),
$
while vector $AC$ is $(1-p,-q)$. Taking the inner product of the vectors gives
$(x_0 +(1-x_0)p)(1-p)-(1-x_0)q^2$ which, after elementary calculations reduces to $0$, as promised. 

Finally, we verify that $M$ is the base of the altitude corresponding to $C$. For this, we compute the projection of $C=(1,0)$ onto the line passing through $A,B$ which reads as $py-qx=0$, which is point $\frac{p}{p^2+q^2}(p,q)$. Finally, elementary calculations can verify that the latter point, together with $K,L_2$ are collinear, and hence $M=\frac{p}{p^2+q^2}(p,q)$ as promised. 
 \end{proof}

The next complementary lemma effectively provides a formula for the optimal $1$-gap of cyclic 3-periodic schedules.

\begin{lemma}
\label{lem: cost of orthic}
Let $p$ be the perimeter of an acute triangle. Then, the perimeter of its orthic triangle is given by 
\begin{equation}
\label{equa: orthic cost}
2p
\left(
\frac1{\sin(B)\sin(C)}
+
\frac1{\sin(A)\sin(C)}
+
\frac1{\sin(A)\sin(B)}
\right)^{-1}.
\end{equation}

\end{lemma}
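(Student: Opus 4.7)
The plan is to read off the key quantitative result embedded in the proof of Theorem~\ref{thm: optimal 1-gap cyclic 3-periodic} and then invoke symmetry. In that proof, under the normalization $BC=1$, the perimeter of the orthic triangle is shown to equal $\|L_1-L_2\|$ evaluated at $x_0$, which comes out to $2\sin(B)\sin(C)$. Since the problem of computing the orthic perimeter is homogeneous under scaling, removing the normalization gives that for an arbitrary acute triangle with side $BC=a$, the orthic perimeter equals $2a\sin(B)\sin(C)$. By relabeling the vertices (the argument never used anything specific about which side is called $BC$), the same quantity also equals $2b\sin(A)\sin(C)$ and $2c\sin(A)\sin(B)$.

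Letting $P$ denote the orthic perimeter, the three expressions $P = 2a\sin(B)\sin(C) = 2b\sin(A)\sin(C) = 2c\sin(A)\sin(B)$ can be solved individually for the side lengths,
\[
a = \frac{P}{2\sin(B)\sin(C)}, \qquad b = \frac{P}{2\sin(A)\sin(C)}, \qquad c = \frac{P}{2\sin(A)\sin(B)}.
\]
Adding them yields
\[
p = a+b+c = \frac{P}{2}\left(\frac{1}{\sin(B)\sin(C)} + \frac{1}{\sin(A)\sin(C)} + \frac{1}{\sin(A)\sin(B)}\right),
\]
and solving for $P$ gives exactly formula~(\ref{equa: orthic cost}).

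There is really no serious obstacle here; the statement is a direct corollary of the computation already performed in the proof of Theorem~\ref{thm: optimal 1-gap cyclic 3-periodic}, provided one isolates the quantity $2\sin(B)\sin(C)$ as the orthic perimeter under the $BC=1$ normalization. If one wishes to avoid appealing to symmetry, the identities $P=2a\sin(B)\sin(C)=2b\sin(A)\sin(C)=2c\sin(A)\sin(B)$ can alternatively be verified directly from the law of sines $a/\sin(A)=b/\sin(B)=c/\sin(C)$, which gives a one-line consistency check. The only minor care needed is to confirm that $x_0 \in [0,1]$ for all acute triangles when each vertex is taken in turn as the apex, which is precisely the observation already made in the proof of the previous theorem.
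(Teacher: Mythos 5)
Your proposal is correct and follows essentially the same route as the paper: both establish that the orthic perimeter equals $2\alpha\sin(B)\sin(C)$ (the paper by summing the explicitly computed lengths $\norm{K-L}$, $\norm{K-M}$, $\norm{M-L}$, you by reading off the minimized value $\norm{L_1-L_2}=2\sin(B)\sin(C)$ from the proof of Theorem~\ref{thm: optimal 1-gap cyclic 3-periodic}), and then both invoke the symmetry of the formula to write each side length in terms of the orthic perimeter, sum, and solve. The two ways of obtaining the base identity are interchangeable, so there is nothing substantive to distinguish.
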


\begin{proof}

As in the proof of Theorem~\ref{thm: optimal 1-gap cyclic 3-periodic} we assume that $A=(p,q), B=(0,0), C=(1,0)$ and hence that $\alpha=1$. From the derived formulas for the coordinates of points $K,L,M$ we have that 
\begin{align*}
\norm{K-L}&=\frac12 \csc(B+C)\sin(2C) \\
\norm{K-M}&=\frac12 \csc(B+C)\sin(2B) \\
\norm{M-L}&=- \cos(B+C).
\end{align*}
But then, elementary trigonometric calculations give
$$
\norm{K-L}+
\norm{K-M}+
\norm{M-L}=2 \sin(B)\sin(C).
$$
It follows that for arbitrary edge length $\alpha$ (not necessarily equal to 1), we have that the perimeter of the orthic triangle equals $X=2 \alpha \sin(B)\sin(C)$. Due to the symmetry of the formula, the perimeter must be also equal to $2\beta \sin(A)\sin(C)
$ and to $2\gamma \sin(A)\sin(B)$. 
We conclude that 
$$
\alpha = \frac{X}{2\sin(B)\sin(C)}, ~~
\beta = \frac{X}{2\sin(A)\sin(C)}, ~~
\gamma = \frac{X}{2\sin(A)\sin(B)}.
$$
So if we denote by $p$ the perimeter of the given triangle, i.e. $p=\alpha+\beta+\gamma$, adding the previous equations and solving for $X$ gives the promised formula. 
\ignore{
\SK{
\\\\
The Expression can be further simplified as follows.
\begin{align*}
&
\frac1{\sin(B)\sin(C)}
+
\frac1{\sin(A)\sin(C)}
+
\frac1{\sin(A)\sin(B)}\\
&=\frac{\sin(A) + \sin(B) + \sin(C)}{\sin(A)\sin(B)\sin(C)}\\
&=\frac{4\cos\frac{A}{2}\cos\frac{B}{2}\cos\frac{C}{2}}{\left(2\sin\frac{A}{2}\cos\frac{A}{2}\right) \cdot \left(2\sin\frac{B}{2}\cos\frac{B}{2}\right)\left(2\sin\frac{C}{2}\cos\frac{C}{2}\right)}\\
&=\frac{1}{2\cdot \sin\frac{A}{2}\cdot \sin\frac{B}{2}\cdot \sin\frac{C}{2}}
\end{align*}
}
}
\ignore{
\SK{Also it can be easily proved that 
$4\cdot \sin \frac{A}{2}\sin \frac{B}{2}\sin \frac{C}{2}  = \left( \frac{r}{R}\right)$}
}
 \end{proof}

\section{Technical Properties of the Orthic Patrolling Schedule}
\label{sec: technical properties}

In this section we explore a number of technical properties associated with the orthic patrolling schedule, which will be the cornerstone of our main results. All observations in this section refer to Figure~\ref{fig: technical orthic} which we explain gradually as we present our findings. 

 \begin{figure}[h!]
\centering
  \includegraphics[width=9cm]{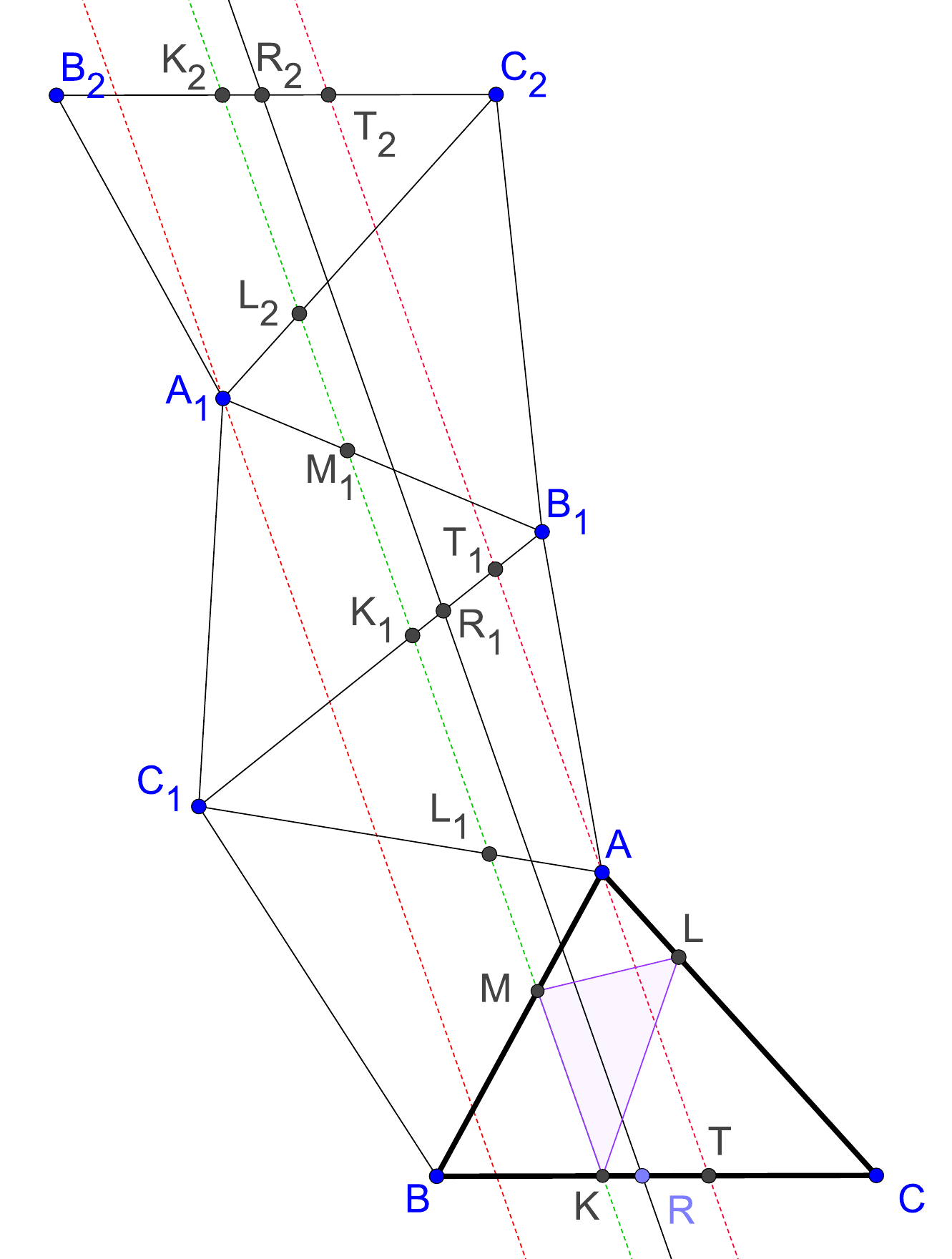}
\caption{The orthic channel (stripe enclosed between the red dotted lines) as it is obtained by 5 triangle reflections.}
\label{fig: technical orthic}
\end{figure}

Our starting point is a triangle $ABC$ with edge lengths satisfying $\alpha \geq \beta \geq \gamma$, where $\alpha, \beta, \gamma$ are the sides opposite to vertices $A, B, C$, respectively. Consequently, the same inequality holds for the corresponding opposite angles, which are labeled according to their respective vertices, i.e. $A,B,C$. We also denote by $K, L, M$ the base points of the altitudes from $A, B,$ and $C$, respectively. It follows that the inscribed triangle $KLM$ is the orthic triangle.

We apply a number of reflections of triangle $ABC$ as follows: 
we obtain 
reflection $C_1$ of $C$ around $AB$, 
reflection $B_1$ of $B$ around $AC_1$, 
reflection $A_1$ of $A$ around $B_1C_1$, 
reflection $C_2$ of $C_1$ around $A_1B_1$, and
reflection $B_2$ of $B_1$ around $A_1C_2$. 
We refer to the resulting triangle construction gadget as the \emph{reflected triangles}.
 
\begin{lemma}
\label{lem: parallel after reflection}
The line passing through $B_2, C_2$ is parallel to the line passing through $BC$. 
\end{lemma}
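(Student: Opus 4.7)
The plan is to identify the global isometry $\Phi$ that takes the segment $BC$ (in the original triangle) to the segment $B_2C_2$ (in the final reflected triangle), and then to show directly that $\Phi$ preserves the direction of the line $BC$. Since $B_2C_2 = \Phi(BC)$ by construction, direction preservation is exactly the desired parallelism.

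I would first reduce $\Phi$ to a clean expression in terms of the three reflections $\sigma_{BC}, \sigma_{CA}, \sigma_{AB}$ across the supporting lines of the three original edges. The key observation is that each $\rho_i$ in the construction reflects across an edge that is itself the image of an original edge under all preceding reflections. Repeatedly applying the conjugation identity $\phi\, \sigma_\ell\, \phi^{-1} = \sigma_{\phi(\ell)}$ telescopes the composition, and a short induction yields
\[
\Phi \;=\; \rho_5\rho_4\rho_3\rho_2\rho_1 \;=\; \sigma_{AB}\, \sigma_{CA}\, \sigma_{BC}\, \sigma_{AB}\, \sigma_{CA} \;=\; R\, \sigma_{BC}\, R,
\]
where $R := \sigma_{AB}\, \sigma_{CA}$. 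Crucially, $R$ is a rotation about vertex $A$, since $CA$ and $AB$ meet at $A$.

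With $\Phi$ in this factored form, the direction argument is immediate. A rotation shifts the direction of every line by one fixed angle $\psi$ (mod $\pi$), while $\sigma_{BC}$ acts on directions by the involution $\theta \mapsto 2\tau - \theta$ (mod $\pi$), where $\tau$ denotes the direction of line $BC$. Tracking $\tau$ through $\Phi = R\, \sigma_{BC}\, R$ gives
\[
\tau \;\xmapsto{R}\; \tau + \psi \;\xmapsto{\sigma_{BC}}\; \tau - \psi \;\xmapsto{R}\; \tau ,
\]
so $\sigma_{BC}$ reverses the angular shift produced by the first $R$, and the second $R$ restores $\tau$. Hence $\Phi$ preserves the direction of line $BC$, and therefore $B_2C_2 \parallel BC$.

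The main obstacle is the bookkeeping in the first step: one must correctly identify the fixed original line that each $\rho_i$ is conjugate to, since the axes used in the construction are images of $BC$, $CA$, or $AB$ under progressively longer compositions. Once the conjugation identity is applied systematically and the inner cancellations $\sigma_\ell\sigma_\ell = \mathrm{id}$ are carried out, the resulting word $R\,\sigma_{BC}\,R$ makes the direction computation purely mechanical.
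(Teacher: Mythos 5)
Your proof is correct, but it takes a genuinely different route from the paper's. The paper argues directly in the unfolded figure: it follows the chain of segments $BC \to BC_1 \to C_1A_1 \to A_1B_2 \to B_2C_2$ and observes that each is obtained from the previous one by a rotation through $2B$, $3C$, $3A$, and $B$ respectively, so the total turning is $3(A+B+C)=3\pi\equiv 0 \pmod{\pi}$. You instead work in the isometry group: you telescope the five reflections via $\phi\,\sigma_\ell\,\phi^{-1}=\sigma_{\phi(\ell)}$ into the word $\sigma_{AB}\,\sigma_{CA}\,\sigma_{BC}\,\sigma_{AB}\,\sigma_{CA}=R\,\sigma_{BC}\,R$ with $R$ a rotation about $A$, and then check that this map fixes the direction of $BC$. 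Both arguments are complete. Yours has the advantage that the final cancellation is independent of the actual rotation angle of $R$ --- you never need to compute $\psi=\pm 2A$ or the individual turning angles $2B, 3C,\dots$, which is where the paper's proof demands the most care --- and it exposes the structural reason for the parallelism, namely that the axis sequence $AB, AC, BC, AB, AC$ factors as (rotation about $A$) composed with (reflection in $BC$) composed with (the same rotation). The paper's argument is more elementary and reads directly off the figure. The only place your write-up leans on the reader is the telescoping step: one must verify, as you acknowledge, that each successive axis ($AC_1$, $B_1C_1$, $A_1B_1$, $A_1C_2$) is indeed the image of the corresponding original edge ($AC$, $BC$, $AB$, $AC$) under the composition of the preceding reflections; this is immediate from the construction of the reflected triangles, but it should be stated explicitly in a final version.
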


\begin{proof}
We consider the slope of several line segments relevant to $BC$. We have the following observations pertaining to counterclockwise rotation of line segments about one of their endpoints. 
The rotation of $BC$ about $B$ by angle $2B$ gives segment $BC_1$. 
The rotation of $BC_1$ about $C_1$ by angle $3C$ gives segment $C_1A_1$. 
The rotation of $C_1A_1$ about $A_1$ by angle $3A$ gives segment $A_1B_2$. 
Finally, the rotation of $A_1B_2$ about $B_2$ by angle $B$ gives a point on the line passing through $B_2,C_2$.

It follows that segment $B_2C_2$ follows by repeated rotation by angle
$
2B+3C+3A+B=3(A+B+C)=3\pi.
$ 
Since $3\pi$ is a multiple of $\pi$ we conclude the claim. 
 \end{proof}

Next, we provide an alternative representation of the orthic trajectory.  
For this purpose (see Figure~\ref{fig: technical orthic}), we consider the points $L_1, K_1, M_1, L_2$, which are identified as the intersections of the segment pairs $(BB_1, AC_1)$, $(AA_1, B_1C_1)$, $(C_1C_2, A_1B_1)$, and $(B_1B_2, A_1C_2)$, respectively.  
Let $K_2$ also be the projection of $A_1$ onto $B_2C_2$.  
Thus, the points $L_1$ (on $AC_1$),
$K_1$ (on $B_1C_1$), 
$M_1$ (on $A_1B_1$), 
$L_2$ (on $A_1C_2$), and
$K_2$ (on $B_2C_2$)
are the bases of the corresponding altitudes in the sequence of the reflected triangles, as described in the proof of Lemma~\ref{lem: parallel after reflection}.

\begin{lemma}
\label{lem: orthic line}
The line passing through MK (green dotted line in Figure~\ref{fig: technical orthic}) passes through the points $L_1$, $K_1$, $M_1$, $L_2$, $K_2$. 
\end{lemma}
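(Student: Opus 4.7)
The plan is to read the claim as the billiard unfolding of the Fagnano orbit. The one preliminary fact I will invoke is that the orthic triangle $KLM$ is a billiard orbit of triangle $ABC$: at each of $K$, $L$, and $M$, the two incident edges of the orthic triangle make equal angles with the side of $ABC$ on which that vertex lies. This is a classical companion to Fagnano's theorem, obtainable by angle chasing on the cyclic quadrilaterals determined by the feet of the altitudes and the orthocenter, or by a first-variation argument applied to the perimeter functional minimized in Theorem~\ref{thm: optimal 1-gap cyclic 3-periodic}.

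Granted this, I would first establish the base case: the extension of $MK$ beyond $M$ passes through $L_1$. By the billiard reflection at $M$, the ray $LM$ and the ray $MK$ make equal angles with $AB$; reflecting $L$ across $AB$ therefore produces a point collinear with $K$ and $M$. I would then identify this reflected point with the lemma's $L_1$ by noting that reflection across $AB$ maps $ABC$ to $ABC_1$ and carries the foot of the altitude from $B$ (namely $L$) to the foot of the altitude from $B$ in $ABC_1$. The latter equals $BB_1 \cap AC_1$ because $B_1$ is, by definition, the reflection of $B$ across $AC_1$, so $BB_1 \perp AC_1$.

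The inductive step is formally the same at each subsequent reflection. After $n$ of the five reflections, the composed isometry carries the orthic triangle of $ABC$ onto the orthic triangle of the current reflected triangle. The billiard reflection law at the current orthic vertex then forces the straight line in the unfolded picture to continue through the image, under the next triangle reflection, of the following orthic vertex. Running this four more times identifies, in order: $K_1$ as the foot of the altitude from $A$ in $AB_1C_1$, equal to $AA_1 \cap B_1C_1$ since $A_1$ is the reflection of $A$ across $B_1C_1$; $M_1$ as the foot of the altitude from $C_1$ in $A_1B_1C_1$, equal to $C_1C_2 \cap A_1B_1$; $L_2$ as the foot of the altitude from $B_1$ in $A_1B_1C_2$, equal to $B_1B_2 \cap A_1C_2$; and $K_2$ as the foot of the altitude from $A_1$ in $A_1B_2C_2$, which is the projection of $A_1$ onto $B_2C_2$.

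The main obstacle is the bookkeeping that each reflected orthic vertex equals the intersection-based description given in the lemma, together with the initial verification of the billiard reflection property; both are local and elementary. Once in place, collinearity of all seven points is immediate, because in the unfolded figure the billiard trajectory is a single straight line. A purely computational alternative would write coordinates for all reflected vertices (extending those in the proof of Theorem~\ref{thm: optimal 1-gap cyclic 3-periodic}) and check collinearity via vanishing determinants, but I expect the unfolding argument to be substantially cleaner.
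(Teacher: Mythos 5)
Your proof is correct, but it takes a genuinely different route from the paper's. The paper proceeds by brute force: it places $ABC$ in coordinates, computes $K$, $M$, and $L_1$ explicitly (via the projection and reflection formulas), verifies that a $3\times 3$ determinant vanishes to get collinearity of $K$, $M$, $L_1$, and then asserts that "an identical argument" handles each subsequent triple. You instead invoke the classical reflection property of the orthic triangle (its sides make equal angles with the sides of $ABC$ at each foot of an altitude) and read the five reflections as a billiard unfolding, so that the orbit straightens into a single line; the only bookkeeping is identifying each reflected orthic vertex with the intersection-based definition of $L_1, K_1, M_1, L_2, K_2$, which you do correctly (e.g., $BB_1\cap AC_1$ is the foot of the perpendicular from $B$ to $AC_1$ precisely because $B_1$ is the reflection of $B$ across $AC_1$). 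Your approach is more conceptual and explains \emph{why} the lemma holds, at the price of importing the reflection law as a prerequisite; note that this prerequisite is essentially already available inside the paper's proof of Theorem~\ref{thm: optimal 1-gap cyclic 3-periodic}, where $K$ and $M$ are constructed as the intersections of the straight segment $L_1L_2$ (reflections of $L$ about $BC$ and $AB$) with those sides --- that is exactly the reflection law at $K$ and at $M$, and relabeling gives it at $L$. The paper's computational route is self-contained and mechanically checkable but opaque, and its "identical argument" for the remaining four triples is really the same induction you make explicit. One small thing to tighten: in the inductive step you should say explicitly that each reflection, being an isometry mapping the current triangle onto the next, carries feet of altitudes to feet of altitudes, so the reflection law transports along the chain; you gesture at this but it is the load-bearing observation.
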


\begin{proof}
The lemma states, equivalently, that the points $K$, $M$, $L_1$, $K_1$, $M_1$, $L_2$, $K_2$ are collinear.

First we argue that $K,M,L_1$ are collinear using analytic geometry. 
For this, and after proper scaling, we can embed triangle $ABC$ in a Cartesian system with $B=(0,0)$, $A=(x,y)$ and $C=(1,0)$. Since $K$ is the projection of $A$ onto $BC$, we have that $K=(x,0)$. 

Next we calculate the projection $M$ of $C$ onto $AB$. Since $M$ is on $AB$, there exists $t\in \reals$ such that $M=(tx,ty)$. However, $MC$ and $BA$ are by definition perpendicular, hence the corresponding vectors have inner product $0$, from which we can easily derive that $t=x/(x^2+y^2)$. Hence, we conclude that 
$M=\left(\frac{x^2}{x^2+y^2},\frac{xy}{x^2+y^2}\right)$.

Next we derive the reflection $C'$ of $C$ about $AB$. For this, we note that $M$ is the middle point of $CC'$, from which we obtain that $C'=\left(1-\frac{2 y^2}{x^2+y^2},\frac{2 x y}{x^2+y^2}\right)$, which we abbreviate by $(x_0,y_0)$. 

Finally, we calculate the projection $L_1$ of $B=(0,0)$ onto $AC_1$. For this, we note that there exists $r\in \reals$ such that $L_1=r(x_0,y_0)+(1-r)(x,y)$. Requiring that the vector $BL_1$ is orthogonal to the vector $AC'$, we get that $r=\frac{x^2-x x_0+y (y-y_0)}{(x-x_0)^2+(y-y_0)^2}$. 

To conclude, we have that $K=(x,0)$, $M=\left(\frac{x^2}{x^2+y^2},\frac{xy}{x^2+y^2}\right)$, and $L_1=r(x_0,y_0)+(1-r)(x,y)$, with $x_0,y_0,r$ calculated as above. For these values, and with elementary algebraic calculations, we can verify that 
$$
\textsc{Det} \begin{bmatrix}
x & 0 & 1 \\
\frac{x^2}{x^2 + y^2} & \frac{x y}{x^2 + y^2} & 1 \\
r x_0 + (1 - r) x & r y_0 + (1 - r) y & 1
\end{bmatrix}=0,
$$
concluding indeed $K,M,L_1$ are collinear.

An identical argument, shows that $M, L_1, K_1$ are collinear, that $L_1, K_1, M_1$ are collinear, that $K_1$, $M_1$, $L_2$ are collinear, and that $M_1, L_2, K_2$ are collinear, concluding the main claim. 
 \end{proof}
It follows from Lemma~\ref{lem: orthic line} that the orthic trajectory over two cycles of the patrolling schedule can also be described by the line segment $KK_2$. We refer to the line passing through $K$ and $K_2$ as the \emph{orthic line}. Additionally, we showed that all points oof the orthic line lie within the reflected triangles. In particular, the orthic line intersects each reflected triangle at least two edges (and if the line passes through a vertex, we consider it as intersecting the two adjacent edges).  
This observation justifies that the following concept is well-defined.
\begin{definition}
\label{def: orthic channel}
The \emph{orthic channel} is defined by two lines $\ell_1$ and $\ell_2$, which are parallel to the orthic line and at maximum distance from it, with the property that each line intersects at least two edges of each reflected triangle.
\end{definition}
Similar reflection-induced channels were studied by~\cite{Schwartz06,Schwartz09}, while the orthic channel that we use was also observed experimentally by~\cite{lafargue2014localized}. 
Next, we establish its significance formally.

\begin{lemma}
\label{lem: 6periodic within channel}
Any line parallel to the orthic line within the orthic channel gives rise to a cyclic $6$-periodic patrolling schedule with a $2$-gap equal to twice the orthic perimeter, i.e., the perimeter of the orthic triangle.
\end{lemma}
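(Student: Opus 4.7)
The plan is to exploit the unfolding displayed in Figure~\ref{fig: technical orthic}. Let $\ell$ be a line parallel to the orthic line $KK_2$ and contained in the orthic channel. First, I would verify that $\ell$ crosses the same sequence of seven edges that $KK_2$ crosses, namely $BC,AB,AC_1,B_1C_1,A_1B_1,A_1C_2,B_2C_2$, at interior points $P_0,P_1,\ldots,P_6$. This follows from Definition~\ref{def: orthic channel}, which forces $\ell$ to intersect at least two edges of each of the six reflected triangles, combined with continuity: since the orthic line itself hits interior points of exactly these seven edges, any parallel line within the channel does the same. Folding the five reflections back onto $ABC$ sends $P_0,\ldots,P_6$ to a sequence of points on the cyclic edge pattern $\alpha,\gamma,\beta,\alpha,\gamma,\beta,\alpha$, which is already a cyclic schedule in the sense of Section~\ref{sec: main definitions and results}.

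Next, I would compute the trajectory's length. Because $BC\parallel B_2C_2$ (Lemma~\ref{lem: parallel after reflection}) and $\ell\parallel KK_2$, the quadrilateral with vertices $K,K_2,P_6,P_0$ is a parallelogram, so $|P_0P_6|=|KK_2|$. By Lemma~\ref{lem: orthic line} the segment $KK_2$ coincides with two full cycles of the orthic schedule, hence has length $2p_o$, where $p_o$ is the perimeter of the orthic triangle. Consequently, the folded trajectory traverses total length $2p_o$ over six consecutive segments.

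To upgrade cyclicity to $6$-periodicity, I must show that $P_6$, folded back, coincides with $P_0$. Let $\phi$ denote the composition of the five reflections that carries triangle $A_1B_2C_2$ onto $ABC$. We already have $\phi(K_2)=K$, because the orthic schedule is $3$-periodic and in particular $6$-periodic. Since $\phi$ is an isometry and $K_2P_6$ lies on $B_2C_2$ while $KP_0$ lies on $BC$, the image $\phi(K_2P_6)$ is a segment on $BC$ emanating from $K$ and of length $|K_2P_6|=|KP_0|$. The remaining step---and the main technical obstacle---is to check that $\phi$ sends the direction $\overrightarrow{K_2P_6}$ to $\overrightarrow{KP_0}$, rather than to its reverse. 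I would handle this by tracking how the five reflections rotate directed segments, extending the angle-summation argument $2B+3C+3A+B=3\pi$ from the proof of Lemma~\ref{lem: parallel after reflection} so as to pin down the action of $\phi$ on the line through $BC$. Alternatively, one can invoke the billiard (equal-angle) reflection law at each edge to observe that the folded trajectory is itself a billiard trajectory parallel in direction to the orthic one, which forces the matching to go through.

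Finally, once $6$-periodicity and a period of length $2p_o$ are in place, the $2$-gap calculation is immediate. The schedule is $3$-cyclic and visits each edge $\delta\in E$ once every three consecutive segments, so the gap between its $k$-th and $(k+2)$-th visit of $\delta$ spans six consecutive segments, i.e.\ one full period of length $2p_o$. Hence $G^2(\delta)=2p_o$ for every $\delta$, and the $2$-gap of the schedule equals $2p_o$, as claimed.
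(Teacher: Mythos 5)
Your proposal is correct and follows essentially the same route as the paper: unfold via the five reflections, use $BC\parallel B_2C_2$ (Lemma~\ref{lem: parallel after reflection}) to form a parallelogram with the orthic segment $KK_2$, conclude the endpoint on $B_2C_2$ is the reflected image of the starting point so the folded trajectory is a closed $6$-periodic cyclic schedule of length $KK_2$, which equals twice the orthic perimeter by Lemma~\ref{lem: orthic line}. The one step you flag as the ``main technical obstacle'' --- that the composition of reflections carries $\overrightarrow{K_2P_6}$ to $\overrightarrow{KP_0}$ rather than its reverse --- is a real subtlety that the paper's own proof passes over silently, and your proposed fixes (tracking directed angles through the five reflections, or the parallel-billiard argument) are both viable ways to close it.
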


\begin{proof}
Consider an arbitrary line, parallel to the orthic line, that intersects line segments $BC, B_1C_1,B_2C_2$ at points $R,R_1,R_2$ respectively, see Figure~\ref{fig: technical orthic}. We observe that $KK_2$ is parallel to $RR_2$, and by Lemma~\ref{lem: parallel after reflection} we have that $K_2R_2$ is parallel to $KR$. Therefore, $KRR_2K_2$ is a parallelogram with $KR=KR_2$.

We conclude that $R_2$ is the reflection of $R$ using the same reflections that obtained $K_2$ from $K$. But then, it follows $RR_2$ corresponds to cyclic $6$-periodic patrolling schedule of $2$-gap equal to $RR_2=KK_2=KK_1+K_1K_2=2KK_1$, as promised. 
 \end{proof}

Next we identify all cyclic $6$-periodic patrolling schedules of the same $2$-gap value. We note that in the following lemma we make explicit use of that the repeated reflections were done first along the smallest two edges. 
\begin{lemma}
\label{lem: the orthic channel}
The lines identifying the orthic channel are the two lines parallel to the orthic line, one passing through $A$ and one passing through $A_1$. 
\end{lemma}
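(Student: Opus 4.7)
The plan is to analyze, for each of the six reflected triangles, the strip of parallels to the orthic line that still intersect it (with vertex contacts counted as two-edge intersections), and then intersect these six strips.

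I would first set up coordinates as in the proof of Theorem~\ref{thm: optimal 1-gap cyclic 3-periodic}, placing $B=(0,0)$, $C=(1,0)$, $A=(p,q)$ with $p,q$ given by \eqref{equa: A coordinates}. From the formulas for $K$ and $M$ one reads off a direction vector of the orthic line and hence, for any point $X$, an unnormalized signed perpendicular distance $f(X)$. A short calculation gives
\[
f(A) = q\bigl(p^2+q^2-p\bigr), \qquad f(B) = -pq, \qquad f(C) = (1-p)q,
\]
which are positive, negative, positive respectively whenever the triangle is acute (the sign of $f(A)$ uses $p^2+q^2>p$, equivalent to acuteness of $\angle B$). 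The remaining five vertex distances come from a reflection argument: the axes $AB, AC_1, B_1C_1, A_1B_1, A_1C_2$ meet the orthic line at $M, L_1, K_1, M_1, L_2$ respectively (all on the orthic line by Lemma~\ref{lem: orthic line}), and each such meeting point is the midpoint of the corresponding reflected pair. Since signed distance is affine and vanishes on the orthic line, each reflection flips the sign, giving $f(C_1)=-f(C)$, $f(B_1)=-f(B)$, $f(A_1)=-f(A)$, $f(C_2)=f(C)$, $f(B_2)=f(B)$.

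The next, and I expect main, step is to establish, using $A\geq B\geq C$, the chain
\[
0 \;<\; f(A) \;\leq\; -f(B) \;\leq\; f(C).
\]
The middle inequality amounts to $q(2p-p^2-q^2)\geq 0$, which via $p^2+q^2=\gamma^2$ and $p=\gamma\cos B$ becomes $\gamma\leq 2\cos B$, and then, by the law of sines, reduces to $\sin(A-B)\geq 0$; the right inequality $q(1-2p)\ge 0$ reduces similarly to $\sin(B-C)\geq 0$. This is the step where the choice to reflect first along the two smallest edges is used explicitly, since it is what places the ``thinnest'' triangle at position~2.

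Finally, a parallel line at signed distance $d$ intersects a reflected triangle with vertex set $V_i$ iff $m_i:=\min_{v\in V_i}f(v)\leq d\leq \max_{v\in V_i}f(v)=:M_i$, so the orthic channel is the interval $[\max_i m_i,\min_i M_i]$. Using the sign table and the chain, a direct tabulation shows $\min_i M_i = f(A)$, uniquely attained at the triangle $ABC_1$, which is the only triangle whose sole positive-side vertex is $A$ (both $B$ and $C_1$ lying on the negative side); by the symmetric argument, $\max_i m_i = f(A_1) = -f(A)$, uniquely attained at $A_1B_1C_2$. The two bounding parallels of the orthic channel are therefore precisely the lines through $A$ and through $A_1$, as claimed.
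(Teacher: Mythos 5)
Your proposal is correct, but it proves the lemma by a genuinely different route than the paper. The paper argues synthetically: it takes the parallel through $A$, shows via an isosceles-trapezoid argument that $KT=K_1T_1$, reduces the claim to $BK\geq BT/2$, establishes that via the similar triangles $BKM$, $BTA$ together with $BM\geq AB/2$ (this is where $A\geq B$ enters), then traces the line segment through the remaining reflected triangles one at a time, and finally invokes the reversed reflection sequence for the line through $A_1$. You instead linearize the whole problem: a signed affine functional $f$ vanishing on the orthic line, the neat observation that each reflected vertex pair has its midpoint (an altitude foot, by Lemma~\ref{lem: orthic line}) on the orthic line so that $f$ simply flips sign under each reflection, and then a reduction of the channel to the interval $\left[\max_i m_i,\min_i M_i\right]$, pinned down by the chain $0<f(A)\leq -f(B)\leq f(C)$, which encodes $A\geq B\geq C$ exactly where the paper uses the edge ordering. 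Your version is more systematic and arguably more transparent about \emph{why} the extremal lines pass through $A$ and $A_1$ (they are the vertices of smallest nonzero $|f|$ on each side), at the cost of coordinate computations; the paper's version stays coordinate-free after its earlier analytic lemmas. Two harmless slips to fix: the condition $p^2+q^2>p$ says that $A$ lies outside the circle with diameter $BC$, i.e.\ it is equivalent to acuteness of $\angle A$, not $\angle B$; and $\max_i m_i=-f(A)$ is attained at the triangle $A_1B_1C_2$ (the one whose sole negative-side vertex is $A_1$, matching your own heuristic), not at $A_1B_2C_2$ where $f(B_2)=f(B)<0$. Neither affects the conclusion, since the bounding line through $A_1$ is the same either way.
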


\begin{proof}
Consider a line parallel to the orthic line passing through $A$, and intersecting $BC$ at $T$ and the line passing through $B_1C_1$ at point $T_1$. We will show that $T_1$ lies in the segment $K_1B_1$. 

First we claim that $KT=K_1T_1$. To see why, recall that $KK_1$ is parallel to $TT_1$. It is enough to show that $KTT_1K_1$ is an isosceles trapezoid. 
Indeed, note that angle $AT_1C_1$ (read counterclockwise) equals angle $KK_1C_1$ (because $TT_1$ is parallel to $KK_1$), and angle $KK_1C_1$ equals angle $BKM$.
Finally, angle $BKM$ equals angle $BTT_1$, because $TT_1$ is parallel to $KK_1$. 
Now $BTT_1=KTT_1$ and $KK_1C_1=TT_1C_1$ and hence  angles $KTT_1$ and $TT_1K_1$ are equal, showing that $KTT_1K_1$ is an isosceles trapezoid as claimed. 

We conclude that in order to show that $T_1$ lies within segment $K_1B_1$ it is enough to show that $KT<KB$. 
At the same time, $K$ is the projection of $A$ onto $BC$, and $K_1$ the projection of $A$ onto $B_1C_1$, hence $K_1$ is the image of $K$ along the two underlying reflections, which implies that $BK=B_1K_1$. Therefore, it is enough to show that the middle point of $BT$ lies within segment $BK$ (or equivalently that $BK\geq BT/2$).
To see why, recall that $AT$ is parallel to $MK$. Moreover, because angle $A$ is at least as large as angle $B$ (that is, our initial reflections where done using the largest edge last), it follows that the base $M$ of altitude $CM$ is closer to $A$ than to $B$. Effectively, this shows that $BM\geq AB/2$.
Therefore, referring to the similar triangles $BKM$ and $BTA$ where segments $MK$ and $AT$ are parallel, we conclude that $BK\geq BT/2$ as wanted. 

Now let the extension of $T T_1$ intersect the line passing through $B_2C_2$ at point $T_2$. 
Since $T_1$ lies within segment $K_1B_1$, and $T_1T_2$ is parallel to $M_1K_1$, it follows that 
$T_1T_2$ intersects $A_1B_1$ at some point between $M_1$ and $B_1$. Observing also that $T_1T_2$ is parallel to $M_1L_2$, we conclude that $T_1T_2$ intersects at some point between $C_2$ and $L_2$, as well as $B_2C_2$ at some point between $K_2$ and $C_2$, which we call $T_2$. 
This shows that the line passing through $AT$ is indeed one of the extreme lines of the orthic channel.

The claim follows by observing that we can repeat the same argument, starting from triangle \( A_1B_2C_2 \) and applying the reverse sequence of reflections that produced the reflected triangles (with \( ABC \) being the final reflected triangle). Note that these reflections would still be applied first with respect to the two smallest edges. Indeed, we can consider a line parallel to the orthic line and passing through \( A_1 \), which, by the same argument, is the other extreme line of the orthic channel.
 \end{proof}

\section{Optimal $2$-Gap Cyclic Schedules}
\label{sec: optimal 2-gap}

The purpose of this section is to prove Theorem~\ref{thm: optimal 2-gap cyclic}, as well as to distinguish the significance of the orthic trajectory with respect to the $2$-gap patrolling problem. 
Indeed, below we prove the theorem by showing that the cyclic $6$-periodic patrolling schedules of Lemma~\ref{lem: 6periodic within channel} are the $2$-gap optimal cyclic schedules of cost twice the perimeter of the orthic triangle. 

Any line parallel to the orthic line within the orthic channel (whose boundaries are given in Lemma~\ref{lem: the orthic channel}) gives rise to a cyclic $6$-periodic schedules that we call \emph{sub-orthic} schedule. We depict such a sub-orthic schedule in Figure~\ref{fig: sub-orthic}.
 \begin{figure}[h!]
\centering
  \includegraphics[width=8cm]{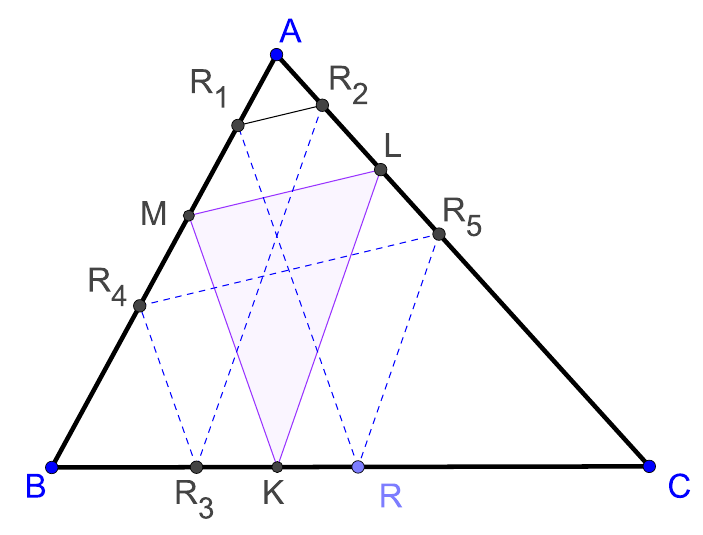}
\caption{
A sub-orthic trajectory example.}
\label{fig: sub-orthic}
\end{figure}
In order to show that any sub-orthic trajectory is $2$-gap optimal, we consider a new patrolling problem on the input triangle $ABC$ with a limited visitation horizon.
In particular, in the \emph{$2k$-limited} patrolling problem, the goal is to find a cyclic trajectory that starts from edge $BC$ (the largest edge), ends after $2k$ visitations of $BC$, and has the minimum total length.
Given triangle $ABC$, we denote by $v_k$ the cost of the optimal solution to the \emph{$2k$-limited} patrolling problem. The following is immediate from our definitions.

\begin{observation}
\label{obs: lower bound to 2gap}
For every $k\geq 1$, the optimal cyclic $2$-gap solution has cost at least $v_k/k$.  
\end{observation}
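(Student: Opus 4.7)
The plan is to lower-bound the $2$-gap of any cyclic schedule by extracting from it a feasible solution for the $2k$-limited patrolling problem and comparing its length against $v_k$. The whole argument is a short telescoping/averaging observation that follows essentially immediately from the definitions introduced just above the statement.

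Concretely, I would start from an arbitrary cyclic schedule $S$ with $2$-gap $G^2 := G^2(S)$. By cyclicity, edge $BC$ is visited infinitely often at positions spaced exactly three apart, so after a harmless index shift I may assume $e(s_0)=BC$; the visits of $BC$ then occur at $s_0,s_3,s_6,\dots$ at trajectory times $0=t_1<t_2<t_3<\cdots$. The defining property of the $2$-gap for edge $BC$ reads $t_{i+2}-t_i\le G^2$ for every $i\ge 1$, and summing over odd indices gives the telescoping bound
\[
t_{2k+1} \;=\; \sum_{j=1}^{k}\bigl(t_{2j+1}-t_{2j-1}\bigr) \;\le\; k\,G^2.
\]

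The remaining step is to observe that the finite prefix $s_0,s_1,\dots,s_{6k-3}$ is itself a feasible trajectory for the $2k$-limited patrolling problem: it starts on $BC$, inherits the cyclic edge pattern from $S$, and terminates exactly at the $2k$-th occurrence of $BC$, because consecutive visits of $BC$ in a cyclic schedule are separated by three segments and $6k-3 = 3(2k-1)$. The length of this prefix equals $t_{2k}$, so by the definition of $v_k$ we have $v_k\le t_{2k}$; combining with the previous display yields $v_k\le t_{2k}\le t_{2k+1}\le k\,G^2$, hence $G^2\ge v_k/k$. Specializing to an optimal cyclic $2$-gap schedule then gives the stated observation.

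There is no substantive obstacle here---the claim is essentially a direct consequence of the definitions, which is why the paper flags it as immediate. The only mild subtlety worth stating cleanly is the accounting convention: matching ``trajectory ends after $2k$ visitations of $BC$'' with ``prefix ending at the $2k$-th occurrence of $BC$,'' which is unambiguous thanks to the fixed three-segment period between consecutive $BC$-visits in any cyclic schedule.
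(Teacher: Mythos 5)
Your argument is correct and is exactly the telescoping/prefix-extraction reasoning the paper has in mind when it declares the observation ``immediate from our definitions'' (the paper supplies no written proof). The indexing care you take---matching the $2k$-th occurrence of $BC$ at position $s_{6k-3}$ with the termination condition of the $2k$-limited problem, and bounding $t_{2k}\le t_{2k+1}\le k\,G^2$ via the $2$-gap of edge $BC$---is precisely the bookkeeping needed to justify that claim.
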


Now recall that by Lemma~\ref{lem: 6periodic within channel}, any sub-orthic trajectory has $2$-gap equal to twice the orthic triangle. Hence, Theorem~\ref{thm: optimal 2-gap cyclic} is a corollary of the following lemma. 

\begin{lemma}
\label{lem: limit of lower bound}
The value of~ $\lim_{k\rightarrow \infty} v_k/k$ equals twice the perimeter of the orthic triangle. 
\end{lemma}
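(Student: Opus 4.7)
The plan is to establish matching upper and lower bounds on $v_k/k$, both equal to twice the orthic perimeter. For the upper bound, I take the orthic trajectory, which by Theorem~\ref{thm: optimal 1-gap cyclic 3-periodic} is $3$-periodic with period equal to the orthic perimeter, starting at $K \in BC$, and traverse it for $2k-1$ complete periods. The resulting trajectory has length $(2k-1)$ times the orthic perimeter and visits $BC$ exactly $2k$ times (at the start and at the end of each period), so $v_k \leq (2k-1) \cdot (\text{orthic perimeter})$ and therefore $\limsup_{k} v_k/k \leq 2 \cdot (\text{orthic perimeter})$.

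For the lower bound, I will use the billiard-style unfolding technique. Given any cyclic $2k$-limited trajectory $s_0, s_1, \ldots, s_{6k-3}$ (where $s_{3i} \in BC$ for $i=0,1,\ldots,2k-1$), unfold it by reflecting the triangle across each visited edge in order, producing a sequence of reflected triangles $T_0, T_1, \ldots, T_{6k-4}$ together with a polygonal path in the plane from $s_0$ to the unfolded endpoint $s_{6k-3}' \in (BC)_{6k-4}$; since reflections are isometries, this polygonal path has length exactly equal to that of the original trajectory. The key geometric claim is that the composition $\Phi_6$ of the six reflections forming one super-cycle (as in Figure~\ref{fig: technical orthic}) is a translation by $\vec{v} := \overrightarrow{KK_2}$. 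Indeed, $\Phi_6$ is an orientation-preserving planar isometry (composition of six reflections), and by Lemma~\ref{lem: parallel after reflection} it preserves the direction of $BC$; the only planar isometry with both properties is a translation. To identify the vector, observe that $\Phi_6(A)$ equals the reflection of $A_1$ across $B_2C_2$, so $A_1$ and $\Phi_6(A)$ have the same perpendicular projection onto $B_2C_2$; since $\Phi_6(K)$ is the projection of $\Phi_6(A)$ onto $\Phi_6(BC) = B_2C_2$, we conclude $\Phi_6(K) = K_2$ and hence $\vec{v} = K_2 - K$. By Lemma~\ref{lem: orthic line}, the orthic line from $K$ to $K_2$ is the concatenation of six segments each congruent to a side of the orthic triangle, so $|\vec{v}| = |KK_2|$ equals twice the orthic perimeter.

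By iteration, $\Phi_{6j}$ is the translation by $j\vec{v}$ for all $j \geq 1$. Writing $6k-4 = 6(k-1) + 2$ and observing that the two additional reflections are conjugates of the first two original reflections by the translation $(k-1)\vec{v}$, we get $(BC)_{6k-4} = B_1C_1 + (k-1)\vec{v}$, so $s_{6k-3}' = q + (k-1)\vec{v}$ for some $q \in B_1C_1$. The original trajectory length is at least the straight-line distance $|s_{6k-3}' - s_0|$, and projecting the displacement onto $\hat{v} := \vec{v}/|\vec{v}|$ yields
\begin{align*}
v_k \;\geq\; |s_{6k-3}' - s_0| &\;\geq\; (s_{6k-3}' - s_0) \cdot \hat{v} \\
&= (q - s_0) \cdot \hat{v} + (k-1)|\vec{v}| \\
&\;\geq\; 2(k-1) \cdot (\text{orthic perimeter}) - D,
\end{align*}
where $D := \sup\{|q - s_0| : q \in B_1C_1,\ s_0 \in BC\}$ is a constant depending only on the input triangle. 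Dividing by $k$ and letting $k\to\infty$ gives $\liminf_k v_k/k \geq 2 \cdot (\text{orthic perimeter})$, completing the proof. The main obstacle is justifying that $\Phi_6$ is a translation whose vector is $\overrightarrow{KK_2}$: direction-preservation comes from Lemma~\ref{lem: parallel after reflection} and orientation-preservation from an elementary parity count, but linking the translation vector to the orthic line and its length of twice the orthic perimeter is the substantive geometric step, supplied by Lemma~\ref{lem: orthic line} and the geometry developed in Section~\ref{sec: technical properties}.
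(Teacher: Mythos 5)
Your proof is correct and follows essentially the same route as the paper: both unfold the trajectory through the reflected-triangle gadget of Section~\ref{sec: technical properties}, observe that the $2k$-th unfolded copy of $BC$ is a parallel segment offset by roughly $k$ times a vector of length $|KK_2|$, i.e., twice the orthic perimeter, and then let the resulting long, thin parallelogram (your projection onto $\hat v$ is the paper's ``shortest diagonal of $RTT_kR_k$'' argument) pinch against the orthic upper bound. The one small over-claim is that Lemma~\ref{lem: parallel after reflection} only gives parallelism of the \emph{undirected} lines through $BC$ and $B_2C_2$ (its proof finds a net rotation of $3\pi$, an odd multiple of $\pi$), so to rule out that $\Phi_6$ is a half-turn rather than a translation you need the additional observation that $\overrightarrow{B_2C_2}=\overrightarrow{BC}$, e.g., by tracking $B\mapsto B_2$, $C\mapsto C_2$ through the six reflections, which is what the paper implicitly uses when it asserts that $A_2B_2C_2$ is a translate of $ABC$.
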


\begin{proof}
In order to visualize the \emph{$2k$-limited} patrolling problem we apply $k$ times the gadget construction of reflected triangles introduced in Section~\ref{sec: technical properties}, see also Figure~\ref{fig: orthic-horizontal-double} for an example when $k=2$. Next we make the process formal. 

 \begin{figure}[h!]
\centering
  \includegraphics[width=14cm]{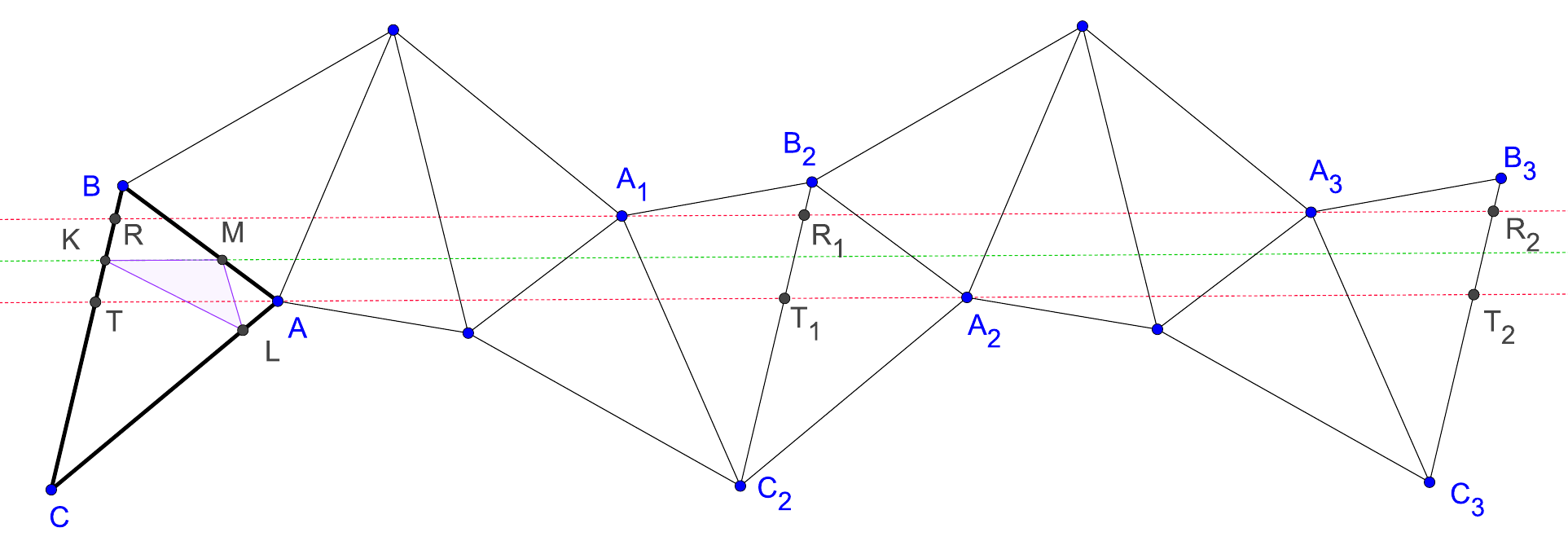}
\caption{
Two applications of reflections.}
\label{fig: orthic-horizontal-double}
\end{figure}

The gadget of the reflected triangles of Figure~\ref{fig: orthic-horizontal-double} defines $B_2C_2$ which is parallel to $BC$. One more reflection of $A_1$ about $B_2C_2$ results into triangle $A_2B_2C_2$ whose edges are piecewise parallel to the edges of $ABC$. 
Starting now with triangle $A_2B_2C_2$, we can implement the same gadget construction of reflected triangles, obtaining this way $B_3C_3$ parallel to $B_2C_2$. 

$k$ applications of the previous construction define a sequence of parallel segments $B_kC_k$. Now consider the orthic channel of $ABC$ identified by the lines passing through $R, A_1$ and $T, A$ (as per Lemma~\ref{lem: the orthic channel}). Also, consider the corresponding points $R_k$ and $T_k$ where these two lines intersect the segments $B_kC_k$.

By the definition of the $2k$-limited patrolling problem, its optimal schedule (with cost $v_k$) is the shortest trajectory that starts from $BC$ and ends at $B_kC_k$. Since the orthic channel stays within all reflected triangles, the optimal solution to the $2k$-limited patrolling problem is the shortest line segment with endpoints within $RT$ and $R_kT_k$. 
Observe that the shortest such segment is the shortest diagonal of the parallelogram $RTT_kR_k$. As $k$ grows, the side $RT$ of these parallelograms remains constant, while $RR_k = TT_k$ tends to infinity. Hence, the ratio of the lengths of the shortest diagonal of $RTT_kR_k$ to $RR_k$ tends to $1$. 
At the same time, $RR_k$ equals $k$ times the $2$-gap of any sub-orthic trajectory and is thus equal to $2k$ times the orthic perimeter.
 \end{proof}

We are now ready to formally prove Theorem~\ref{thm: optimal 2-gap cyclic}.

\begin{proof}[of Theorem~\ref{thm: optimal 2-gap cyclic}]
Observation~\ref{obs: lower bound to 2gap} together with Lemma~\ref{lem: limit of lower bound} imply that the optimal cyclic $2$-gap solution has cost at least twice the perimeter of the orthic triangle. At the same time, every sub-orthic schedule, which is also a cyclic $6$-periodic schedule, has value equal to the perimeter of the orthic triangle, and the claim follows. 
\end{proof}

Note that the orthic trajectory is one among the sub-orthic trajectories, and hence optimal too to the $2$-gap patrolling problem (among cyclic schedules). In the following proposition we show that the orthic trajectory is also an optimal solution to a multi-objective optimization problem. 

\begin{proposition}
\label{lem: orthic is best sub-orthic}
Among all $2$-gap optimal sub-orthic trajectories, the one that minimizes the visitation gap between any two (not necessarily same) edges is the orthic trajectory. 
\end{proposition}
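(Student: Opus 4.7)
The plan is to reduce the claim to an analysis of the 1-gap of sub-orthic schedules, since the max-pair-gap objective (the maximum, over pairs $(X,Y)$ of edges, possibly with $X=Y$, of the gap from a visit to $X$ to the next visit of $Y$) is at least the schedule's 1-gap. I will show that the orthic is the unique sub-orthic whose 1-gap equals the orthic perimeter $P$, and that the orthic itself has max-pair-gap exactly $P$; these two facts combine to give the proposition.

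For the 1-gap analysis, denote the six consecutive segment lengths of a sub-orthic trajectory (indexed starting from a visit to $BC$) by $r_1, \ldots, r_6$, summing to $2P$ by Lemma~\ref{lem: 6periodic within channel}. Then the 1-gap of edge $\alpha$ equals $\max(r_1+r_2+r_3,\,r_4+r_5+r_6) \ge P$ by pigeonhole, and the analogous lower bound holds for $\beta$ and $\gamma$. Equality in all three simultaneously yields $r_1+r_2+r_3 = r_4+r_5+r_6 = P$, $r_2+r_3+r_4 = r_5+r_6+r_1 = P$, and $r_3+r_4+r_5 = r_6+r_1+r_2 = P$; subtracting consecutive equalities gives $r_i = r_{i+3}$ for $i=1,2,3$.

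Next I argue that the single equality $r_1 = r_4$ already forces the trajectory to be 3-periodic, hence equal to the orthic by Fagnano's uniqueness of the 3-periodic billiard orbit in an acute triangle. Let $R, R'$ denote the two successive visits of $BC$, and $P_1, P_3$ the two successive visits of $AB$, with $P_1$ following $R$ and $P_3$ following $R'$. The key observation is that both $P_1$ and $P_3$ are reached from $BC$ along the same outgoing direction $v$: unfolding via the reflections of Section~\ref{sec: technical properties}, the sub-orthic straight line is parallel to $KK_2$, so the segment directions in the original triangle cycle as $v, v', v'', v, v', v''$. Writing $P_1 = R + r_1 v$ and $P_3 = R' + r_4 v$, the equality $r_1 = r_4$ gives $P_3 - P_1 = R' - R$, a vector that is simultaneously parallel to $AB$ and to $BC$; since these two lines are not parallel, both sides must vanish, so $R = R'$ and the schedule has period $3$.

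Finally, for the orthic trajectory each same-edge gap equals $P$ by 3-periodicity, while each distinct-edge gap is a sum of one or two orthic side lengths, hence at most the sum of the two largest orthic sides, which is strictly less than $P$ in an acute triangle; thus the orthic's max-pair-gap is exactly $P$. For any other sub-orthic the 1-gap strictly exceeds $P$ by the above, so its max-pair-gap also strictly exceeds $P$, and the orthic is the unique minimizer. The main obstacle is the directional argument of the third paragraph: rigorously verifying, via the reflection-unfolding and the billiard law, that the first and fourth $BC$-leaving segments share a common outgoing direction $v$, which is precisely what allows the vector identity $P_3 - P_1 = R' - R$ to force $R = R'$.
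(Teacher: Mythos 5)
Your argument is correct, but it takes a genuinely different route from the paper's, and the two in fact bound slightly different quantities. The paper's proof is a three-line averaging argument at the level of individual segments: it uses the fact that each orthic side is the midline of the two parallel sides of any sub-orthic hexagon (e.g.\ $MK=\tfrac12(RR_1+R_3R_4)\le\max\{RR_1,R_3R_4\}$), and concludes that the orthic minimizes the maximum gap between \emph{consecutive} visits; it says nothing about same-edge gaps and does not establish uniqueness. You instead read ``not necessarily same'' as including the pair $(X,X)$, so your objective dominates the $1$-gap, and you prove the stronger claim that the orthic is the \emph{unique} sub-orthic attaining $1$-gap equal to the orthic perimeter $P$: the pigeonhole on $r_1+\dots+r_6=2P$ forcing $r_i=r_{i+3}$, and the unfolding argument that $r_1=r_4$ collapses $R'=R$, are both sound (the paper itself asserts the parallelism $RR_1\parallel R_3R_4\parallel MK$ that you need, and the orientation is forced because both segments run from $BC$ to $AB$, lines which are not parallel). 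Two points to shore up: (i) rather than citing classical uniqueness of the $3$-periodic billiard orbit, you can simply note that a closed $3$-periodic sub-orthic is an inscribed triangle of perimeter $P$ and invoke uniqueness of the Fagnano minimizer, which follows from the strictly convex quadratic in the proof of Theorem~\ref{thm: optimal 1-gap cyclic 3-periodic}; (ii) if the intended objective is only the consecutive-visit gap (which is what the paper's proof actually controls), your reduction to the $1$-gap does not apply and the midline inequality is the argument needed. In short, your route buys uniqueness and handles the same-edge gaps; the paper's buys brevity and handles the consecutive-visit gaps.
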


\begin{proof}
Consider an arbitrary sub-orthic trajectory $RR_1R_2R_3R_4R_5R$, see Figure~\ref{fig: sub-orthic}. 
Note that the sub-orthic schedule is made up of segments that are piecewise parallel to the segments of the orthic trajectory, and any of the orthic line segments lies in the middle of any of the two parallel segments of the sub-orthic schedule. 

In particular we have $RR_1, R_3R_4$ are parallel to $MK$, as well as $R_1R_2, R_4R_5$ are parallel to $ML$, and $RR_5, R_2R_3$ are parallel to $KL$. Moreover, 
$MK\leq \max\{ RR_1, R_3R_4 \},
ML\leq \max\{ R_1R_2, R_4R_5 \}$, and 
$KL \leq \max\{ RR_5, R_2R_3\}$. It follows that maximum visitation gap $\max\{MK,ML,KL\}$ between any two edges in the orthic trajectory is at most the maximum visitation gap between any two edges in any sub-orthic trajectory. 
 \end{proof}

\ignore{
 \begin{figure}[h!]
\centering
  \includegraphics[width=8cm]{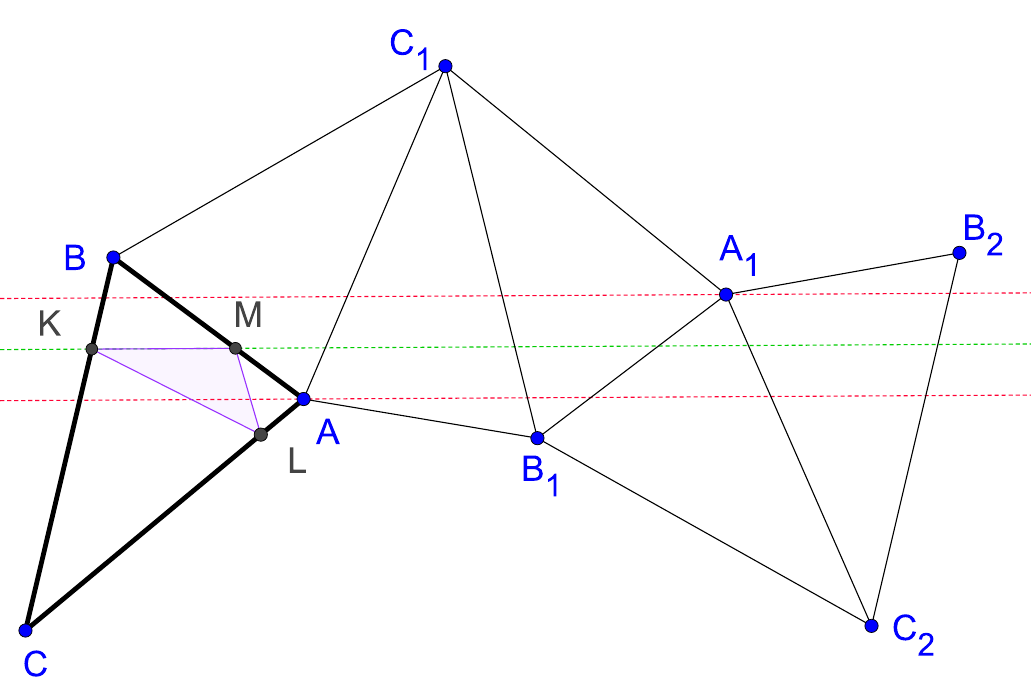}
\caption{
A is largest angle. Reflections are wrt C,B,A,C,B.
Red lines parallel to green (hence to MK edge or orthic triangle.
Red lines pass through images of A. }
\label{fig: 5 reflections}
\end{figure}
}

\section{$1$-Gap Optimal Schedule}
\label{sec: optimal cyclic}

It is immediate from the definitions that half the cost of the $2$-gap optimal patrolling schedule is a lower bound to the cost of the $1$-gap optimal patrolling schedule. By Theorem~\ref{thm: optimal 2-gap cyclic}, the $2$-gap optimal patrolling schedule has cost 2 times the perimeter of the orthic triangle. Hence, the cost of the $1$-gap optimal schedule is at least the perimeter of the orthic triangle. On the other hand, by Theorem~\ref{thm: optimal 1-gap cyclic 3-periodic} we have a patrolling schedule (the orthic trajectory) with $1$-gap equal to the orthic perimeter.
Therefore, we obtain the following immediate corollary. 

\begin{corollary}
\label{cor: optimal 1-gap cyclic}
Restricted to cyclic schedules, the orthic patrolling trajectory of a triangle is $1$-gap optimal. 
\end{corollary}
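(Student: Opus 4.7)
The plan is to obtain a lower bound for the $1$-gap of any cyclic schedule that matches the $1$-gap of the orthic trajectory, which by Theorem~\ref{thm: optimal 1-gap cyclic 3-periodic} equals the perimeter of the orthic triangle. The entire argument hinges on a single structural inequality relating the $1$-gap and the $2$-gap of the same schedule, plus the two main results already in hand.

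First I would record the elementary inequality $G^1(S) \geq G^2(S)/2$ that holds for every feasible schedule $S$. This follows directly from the identity $\left(g^t(\delta)\right)_i =\sum_{j=i}^{i+t-1}\left(g^1(\delta)\right)_j$ stated in Section~\ref{sec: main definitions and results}: setting $t=2$ gives $\left(g^2(\delta)\right)_i = \left(g^1(\delta)\right)_i + \left(g^1(\delta)\right)_{i+1} \leq 2\, G^1(\delta)$, and taking the supremum over $i$ and then the maximum over $\delta \in E$ yields $G^2(S) \leq 2\, G^1(S)$.

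Next I would apply this to an arbitrary cyclic schedule $S$. Since $G^2(S)$ is at least the infimum of $G^2$ over all cyclic schedules, and since Theorem~\ref{thm: optimal 2-gap cyclic} identifies the latter quantity as exactly twice the perimeter of the orthic triangle, the previous inequality gives
\[
G^1(S) \;\geq\; \tfrac{1}{2}\, G^2(S) \;\geq\; \text{perimeter of the orthic triangle}.
\]
This is a universal lower bound on the $1$-gap across \emph{all} cyclic schedules, not merely the $3$-periodic ones considered in Fagnano's theorem.

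Finally, I would match the bound from above: by Theorem~\ref{thm: optimal 1-gap cyclic 3-periodic}, the orthic trajectory is a cyclic schedule whose $1$-gap is precisely the perimeter of the orthic triangle. The upper and lower bounds coincide, so the orthic trajectory is $1$-gap optimal among cyclic schedules. There is no real obstacle to this proof since both heavy ingredients — Theorems~\ref{thm: optimal 1-gap cyclic 3-periodic} and~\ref{thm: optimal 2-gap cyclic} — are already established; the only new observation needed is the trivial bound $G^1 \geq G^2/2$, which is immediate from the definitions.
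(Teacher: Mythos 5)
Your proposal is correct and follows essentially the same route as the paper: the paper also derives the lower bound from the observation that half the optimal $2$-gap cost bounds the $1$-gap from below (which you justify explicitly via the identity $\left(g^2(\delta)\right)_i=\left(g^1(\delta)\right)_i+\left(g^1(\delta)\right)_{i+1}$), invokes Theorem~\ref{thm: optimal 2-gap cyclic} to evaluate that bound as the orthic perimeter, and matches it with the orthic trajectory via Theorem~\ref{thm: optimal 1-gap cyclic 3-periodic}. Your write-up merely makes the ``immediate from the definitions'' step of the paper more explicit.
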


The purpose of this section is to prove Theorem~\ref{thm: optimal 1-gap cyclic}, that is to strengthen the statement of Corollary~\ref{cor: optimal 1-gap cyclic} by showing that the optimal $1$-gap schedule is actually cyclic. We do so by showing how to modify an arbitrary schedule into a cyclic schedule, without increasing its $1$-gap. Effectively, the  next lemma implies Theorem~\ref{thm: optimal 1-gap cyclic}.

\begin{lemma}
\label{lem: there is optimal 1-gap that is cyclic}
There is a $1$-gap optimal schedule that is cyclic. 
\end{lemma}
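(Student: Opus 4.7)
The plan is to establish the stronger claim that for \emph{every} (not necessarily cyclic) schedule $S$, the $1$-gap is at least the perimeter $p_O$ of the orthic triangle. Since Theorem~\ref{thm: optimal 1-gap cyclic 3-periodic} already exhibits the orthic schedule as a cyclic $3$-periodic schedule achieving $G^1=p_O$, such a lower bound immediately certifies the orthic schedule as $1$-gap optimal among \emph{all} schedules, and in particular delivers a $1$-gap optimal schedule that is cyclic. The natural route is to upgrade the $2$-gap lower bound of Section~\ref{sec: optimal 2-gap} from cyclic to arbitrary schedules: once $G^2\geq 2p_O$ is known for every schedule, the elementary inequality $G^2\leq 2G^1$, which holds because each $2$-gap is the sum of two consecutive $1$-gaps and each of those is at most $G^1$, immediately yields $G^1\geq p_O$.

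Concretely, I would fix an arbitrary schedule $S$ with $2$-gap $G^2$ and an integer $k\geq 1$. First, extract the sub-trajectory running from the $1$st to the $(2k)$-th visit of the edge $BC$; pairing consecutive $2$-gaps of $BC$ telescopically bounds its total length by $kG^2$. Second, apply the reflection gadget of Section~\ref{sec: technical properties} iteratively to unfold this sub-trajectory into a piecewise-linear path that starts on $BC$ and ends on the $k$-fold reflected copy $B_kC_k$. The orthic-channel analysis underlying Lemma~\ref{lem: limit of lower bound} then lower-bounds the length of this unfolded path by the shortest diagonal of the parallelogram cut out by the orthic channel and the two parallel segments $BC$ and $B_kC_k$, which grows like $2kp_O$ as $k\to\infty$. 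Combining the two bounds gives $v_k\leq kG^2$, and passing to the limit yields $G^2\geq 2p_O$. Applying $G^2\leq 2G^1$ and invoking Theorem~\ref{thm: optimal 1-gap cyclic 3-periodic} to identify the orthic schedule as the attainer then completes the argument.

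The main obstacle is justifying the second step when $S$ is not cyclic. The unfolding in Lemma~\ref{lem: limit of lower bound} was tied to a specific cyclic pattern of edge visits, which dictated the order of reflections; a general schedule may visit the two smaller edges in a different order, or with extra visits sandwiched between consecutive $BC$-visits, so the unfolded path can live in a different arrangement of reflected triangles and the orthic-channel geometry does not transfer verbatim. The fix I would aim for is combinatorial: the feasibility of $S$ together with its finite $1$-gap forces the other two edges to be visited within bounded windows, so on average every pair of $BC$-visits advances the unfolded path by at least one ``gadget length'' in the direction orthogonal to $BC$, making the total Euclidean displacement $\Omega(k p_O)$ and hence the unfolded length $\Omega(k p_O)$ as well. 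If this geometric bookkeeping proves too delicate, the backup strategy, closer in spirit to the modification idea hinted at in the section preamble, is to construct a cyclic schedule $S'$ directly from $S$ by pruning redundant intermediate visits and rerouting via the triangle inequality, verifying at each local step that the $1$-gap does not increase.
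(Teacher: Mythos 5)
Your overall strategy is genuinely different from the paper's: you aim to prove the universal lower bound $G^1\ge p_O$ (where $p_O$ is the orthic perimeter) by first establishing $G^2\ge 2p_O$ for \emph{arbitrary} schedules and then applying $G^2\le 2G^1$. The second step is fine, but the first is exactly where the proposal breaks down, and you have located the gap without closing it. The paper's $2$-gap machinery (the $2k$-limited problem, Observation~\ref{obs: lower bound to 2gap}, Lemma~\ref{lem: limit of lower bound}) is defined and proved only for \emph{cyclic} trajectories: the unfolding into the corridor of reflected triangles is dictated by the fixed visitation order, and it is precisely this order that makes the unfolded path land in the orthic channel whose geometry forces length close to $2kp_O$. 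For a non-cyclic schedule the reflection sequence is different, and worse, the quantitative claim you need is false at the level at which you state it: nothing in the definition of a schedule prevents two consecutive visits to $BC$ from being nearby points joined by an arbitrarily short segment (even $e(s_i)=e(s_{i+1})=BC$ is allowed), so the sub-trajectory between the $1$st and $(2k)$-th visit of $BC$ need not have length $\Omega(kp_O)$; the large gaps may be charged to a different edge, and controlling the \emph{maximum} over the three edges is an argument you do not supply. Your proposed fix (``on average every pair of $BC$-visits advances the unfolded path by one gadget length'') is exactly the assertion that fails in this scenario.

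Your one-sentence ``backup strategy'' is in fact the paper's actual proof. The paper takes an arbitrary non-cyclic schedule, isolates a window $s_k,\dots,s_{k+\ell}$ between two consecutive visits of one edge in which a second edge is visited twice, lower-bounds the $1$-gap $G$ by $2\min\{s_ks_{k+1}+s_{k+1}s_{k+2},\; s_{k+2}s_{k+3}+s_{k+3}s_{k+\ell}\}$ via the triangle inequality, and then exhibits two explicit $3$-periodic cyclic schedules (the triangles $s_ks_{k+1}s_{k+2}$ and $s_{k+2}s_{k+3}s_{k+\ell}$) whose perimeters are bounded by twice those same two sums, so at least one of them has $1$-gap at most $G$; combined with Corollary~\ref{cor: optimal 1-gap cyclic} this proves the lemma with no need for any non-cyclic $2$-gap bound. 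To make your argument work you would either have to carry out this surgery in full, or genuinely prove $G^2\ge 2p_O$ for arbitrary feasible schedules, which is a strictly stronger statement than anything established in the paper.
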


\begin{proof}
Consider an arbitrary schedule $S=\{s_i\}_{i\geq 0}$ that is not cyclic. We show how to construct a new schedule that is cyclic and 3-periodic, without increasing its $1$-gap. Indeed, since $S$ is not cyclic, and after renaming edges, there are two consecutive visitations of edge $\alpha$ so that both edges $\beta,\gamma$ are visited in between, with at least one of them being visited more than once. In other words, for some $k,\ell\in \naturals$, $\ell\geq 4$ we have that $e(s_k)=e(s_{k+\ell})=\alpha$, $e(s_{k+1})=e(s_{k+3})=\beta$ and $e(s_{k+2})=\gamma$. 

Then, we see that for the $1$-gap $G$ of $S$, we have that 
\begin{align*}
G 
= &  \max_{\delta \in E} G(\delta) 
\geq  G(\alpha) 
\geq \sum_{i=0}^{\ell-1} s_{k+i}s_{k+i+1} \\
\geq  &
s_ks_{k+1} 
+ s_{k+1}s_{k+2}
+ s_{k+2}s_{k+3}
+ s_{k+3}s_{k+\ell} \\
\geq& 2 \min\{
s_ks_{k+1} 
+ s_{k+1}s_{k+2}, 
s_{k+2}s_{k+3}
+ s_{k+3}s_{k+\ell}
  \},
\end{align*}
where the second to last inequality is due to the triangle inequality. 

Now we consider two different cyclic and $3$-periodic schedules, $S',S''$, with $1$-gap values $G',G''$, respectively, and we show that $\min\{G',G''\} \leq G$. The two schedules are the following. 
\begin{align*}
S' & = s_k,s_{k+1},s_{k+2},  s_k,s_{k+1},s_{k+2},s_k,s_{k+1},s_{k+2},\ldots \\
S'' & = s_{k+2},s_{k+3},s_{k+\ell},s_{k+2},s_{k+3},s_{k+\ell},s_{k+2},s_{k+3},s_{k+\ell},
\ldots
\end{align*}
Since both $S',S''$ are cyclic and periodic, we have that 
$G'=G'(\alpha)=G'(\beta)=G'(\gamma)$ and 
$G''=G''(\alpha)=G''(\beta)=G''(\gamma)$. In particular, using the triangle inequality again, we have 
\begin{align*}
G'
&=
s_ks_{k+1}+ 
s_{k+1}s_{k+2}
+ s_{k+2}s_k \leq 2(s_ks_{k+1}+ s_{k+1}s_{k+2}) \\
G''
&=
s_{k+2}s_{k+3}
+
s_{k+3}s_{k+\ell}
+s_{k+\ell}s_{k+2}
\leq 2(s_{k+2}s_{k+3}
+
s_{k+3}s_{k+\ell}).
\end{align*}
But then, 
$\min\{G',G''\} \leq 
2 \min\{
s_ks_{k+1} 
+ s_{k+1}s_{k+2}, 
s_{k+2}s_{k+3}
+ s_{k+3}s_{k+\ell}
  \}
  \leq G$, as wanted. 
 \end{proof}

\section{The Greedy Cyclic Schedule}
\label{sec: greedy}

In this section, we prove Theorem~\ref{thm: greedy}, that is, we describe a patrolling schedule that converges to a $3$-periodic cyclic schedule whose $1$-gap differs from the $1$-gap optimal cyclic schedule by a factor $\gamma \in [1,1.20711]$. It will follow from our analysis that our greedy algorithm is nearly optimal for any acute triangle with one sufficiently small angle, and it performs worst compared to the optimal solution when the given triangle is a right isosceles.

We proceed by the description of a greedy patrolling schedule. We assume that the patroller can remember the current and previously visited edges (not necessarily their points), as well as that it can compute (move along) the projection of its current position to any other edge.
Formally, we label the three edges $BC,AB,AC$ as $0,1,2$, respectively. The patrolling schedule starts from an arbitrary point $p_0$ on $BC$. For each $i\geq 1$, the patroller moves to point $p_i$, which is the projection of $p_{i-1}$ onto edge $i \bmod 3$. Referring to triangle $ABC$ as in Figure~\ref{fig: greedy}, we note that the patrolling schedule induces a clockwise cyclic visitation of the given triangle. An immediate corollary of our results will imply that also the corresponding counterclockwise cyclic visitation induces the same $1$-gap. 

\begin{minipage}{\linewidth}
      \centering
      \begin{minipage}{0.35\linewidth}
\begin{figure}[H]
\centering
  \includegraphics[width=6.1cm]{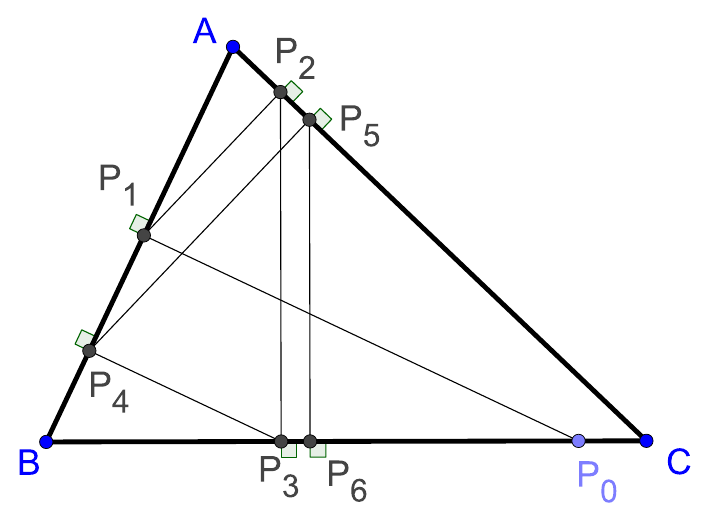}
\caption{
Six iterations of the greedy patrolling schedule that starts from point $p_0$ of edge $BC$.
}
\label{fig: greedy}
\end{figure}
      \end{minipage}
      \hspace{0.15\linewidth}
      \begin{minipage}{0.42\linewidth}
\begin{figure}[H]
\centering
  \includegraphics[width=6.1cm]{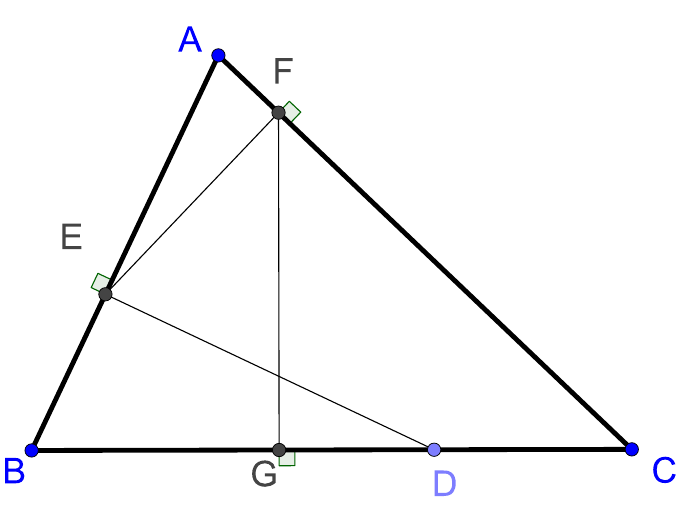}
\caption{One iteration of the greedy patrolling schedule, starting from point $D$.}
\label{fig: 1iteration}
\end{figure}
      \end{minipage}
  \end{minipage}

\begin{lemma}
\label{lem: greedy converges}
For any acute triangle $ABC$ and any initial starting point, the greedy algorithm converges to a cyclic $3$-periodic schedule that has $1$-gap 
\begin{equation}
\label{equa: greedy cost}
p  \cdot \frac{\sin(A)\sin(B)\sin(C)}{1+\cos(A)\cos(B)\cos(C)}
\end{equation}
where $p$ is the perimeter of triangle $ABC$.
\end{lemma}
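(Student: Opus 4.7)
The plan is to parameterize the greedy iteration by a single real variable --- the distance $x$ of the current point from vertex $B$ along edge $BC$ --- and to analyze the composition of three consecutive projections. With the convention $AB=c$, $BC=a$, $CA=b$, the first projection lands on $AB$ at distance $c - x\cos(B)$ from $A$; the second lands on $CA$ at distance $b - (c-x\cos(B))\cos(A)$ from $C$; and the third returns to $BC$ at distance
$$x' = a - b\cos(C) + c\cos(A)\cos(C) - x\,\cos(A)\cos(B)\cos(C)$$
from $B$. Hence the three-step map is affine with slope $-k$, where $k := \cos(A)\cos(B)\cos(C) \in (0,1)$ for any acute triangle. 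A standard contraction-mapping argument then yields geometric convergence to the unique fixed point
$$x^* = \frac{a - b\cos(C) + c\cos(A)\cos(C)}{1+k},$$
regardless of the starting point $p_0$, establishing convergence to a $3$-periodic cyclic schedule.

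To simplify $x^*$ I would use the projection identity $a = b\cos(C) + c\cos(B)$ together with $\cos(B) + \cos(A)\cos(C) = \sin(A)\sin(C)$ (immediate from $\cos(B) = -\cos(A+C)$), giving
$$x^* = \frac{c\sin(A)\sin(C)}{1+k}.$$
Feeding $x^*$ through the first two projection formulas and invoking the companion identity $\sin(B)\sin(C) = \cos(A) + \cos(B)\cos(C)$ --- which produces a clean cancellation between $1+k$ and $\sin(B)\sin(C)\cos(A)$ --- yields the two remaining cycle positions $y^* = c\sin^2(B)/(1+k)$ (distance from $A$ along $AB$) and $z^* = b\sin^2(A)/(1+k)$ (distance from $C$ along $CA$).

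The $1$-gap of a cyclic $3$-periodic schedule equals the perimeter of the inscribed triangle $p_0 p_1 p_2$, since each edge is revisited exactly once per period. Each leg of $p_0 p_1 p_2$ is perpendicular to its target edge, so its length equals the perpendicular distance from its starting point to that edge: $|p_0 p_1| = x^*\sin(B)$, $|p_1 p_2| = y^*\sin(A)$, $|p_2 p_0| = z^*\sin(C)$. After substituting the expressions for $x^*, y^*, z^*$ and converting sides via the law of sines $a = 2R\sin(A)$, $b = 2R\sin(B)$, $c = 2R\sin(C)$, the three terms share a common factor, producing
$$|p_0 p_1| + |p_1 p_2| + |p_2 p_0| = \frac{2R\sin(A)\sin(B)\sin(C)}{1+\cos(A)\cos(B)\cos(C)}\bigl(\sin(A)+\sin(B)+\sin(C)\bigr).$$
A final use of the law of sines in the form $2R(\sin(A)+\sin(B)+\sin(C)) = a+b+c = p$ yields exactly the claimed formula.

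The main obstacle I anticipate is trigonometric bookkeeping: the simplifications of $y^*$ and $z^*$ hinge on the identity $\sin(B)\sin(C)\cos(A) = \cos^2(A) + k$, and one must carefully track from which vertex each coordinate is measured during the iteration. Acuteness is used only to guarantee $k \in (0,1)$ for the contraction and, as a free byproduct (via a short check using the law of sines), to ensure that every perpendicular projection in the greedy step lands strictly inside its target edge, so no degenerate case arises along the orbit.
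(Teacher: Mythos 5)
Your proposal is correct and follows essentially the same route as the paper: both derive the same affine three-step return map on $BC$ with contraction ratio $\cos(A)\cos(B)\cos(C)$ and conclude geometric convergence to the unique fixed point. The only difference is in evaluating the perimeter of the limit cycle --- the paper shows the limiting inscribed triangle is similar to $ABC$ and computes the similarity ratio, whereas you sum the three legs directly as perpendicular distances; both are routine trigonometric computations yielding the same formula (and your closing remark that the projections stay interior to the edges is a detail the paper leaves implicit).
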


\begin{proof}
Consider an arbitrary iteration of the greedy algorithm and a point $D$ on $BC$, see Figure~\ref{fig: 1iteration}. After 3 consecutive steps, the patroller has moved to the projection $E$ of $D$ onto $AB$, its projection $F$ on $AC$ and to its projection $G$ back on $BC$. To simplify calculations, assume also that $BC$ has length $1$. Below, we derive a relation between $BG$ and $BD$. 

First we note that $AF=\cos(A) AE = \cos(A) (\gamma-BE) = \cos(A)(\gamma - \cos(B) BD)$. Then, we use the derived formula for $AF$ to calculate 
\begin{align*}
BG &= 1-CG 
		=1 - \cos(C) CF 
		=1 - \cos(C) (\beta - AF) 		
		=1 - \cos(C) \left(\beta - \cos(A)(\gamma - \cos(B) BD) \right). 
\end{align*}
It follows that 
$BG = c - \cos(A) \cos(B) \cos(C) \, BD,$ 
where the constant $c = 1 - \cos(C) \beta + \cos(A) \cos(C) \gamma$ is independent of the points $G$ and $D$. 

In the greedy patrolling schedule, edge $BC$ is visited infinitely often. We denote by $d_i$ the distance between the point on $BC$ at the $i$-th visitation and point $B$. In particular, we have $d_1 = BD$ and $d_2 = BG$. 
Therefore, we have shown that 
$d_{2} = c - \cos(A) \cos(B) \cos(C) \, d_1$, 
and, by induction, we conclude that 
$d_{i+1} = c - \cos(A) \cos(B) \cos(C) \, d_i$ 
for all $i \geq 1$.

The latter is a non-homogeneous linear recurrence relation of degree $1$ of the form 
$d_{i+1} = c - x \, d_i$, 
where $x = \cos(A)\cos(B)\cos(C)$, and note that $|x| < 1$.
One particular solution to this recurrence is $c / (1 + x)$. Moreover, the closed-form solution to the corresponding homogeneous recurrence is $(-x)^{n-1} d_1$. 
It follows that $\lim_{i \rightarrow \infty} d_i$ exists, and its value is obtained when, in the previous argument, the points $D$ and $G$ coincide (see Figure~\ref{fig: limit}).

\begin{figure}[H]
\centering
  \includegraphics[width=8cm]{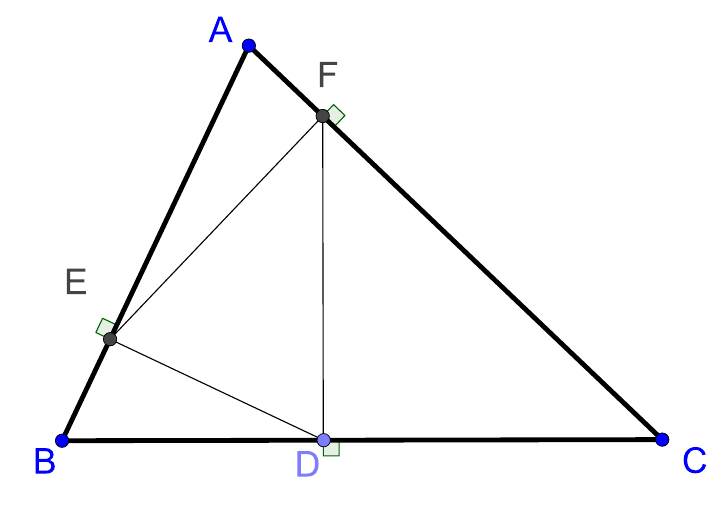}
\caption{
The limiting cyclic $3$-periodic trajectory of the (clockwise) greedy algorithm}
\label{fig: limit}
\end{figure}

We proved that inscribed triangle $DEF$ is the limiting patrolling schedule of the greedy algorithm, which is indeed a cyclic $3$-periodic schedule. Next we calculate its cost. 
To this end, we claim that triangles $DEF$ and $ABC$ are similar. By denoting by $F,E,G$ the angles of the inscribed triangle, and looking at right triangle $FD$ we have 
$
F=\pi - \pi/2 - (\pi-C-\pi/2)=C
$.
Similarly we obtain that angles $D,B$ are equal, and angles $E,A$ are equal. 

Finally we compute the similarity ratio $k<1$ of triangles $DEF,ABC$. We have that
$$
\alpha=BD+DC=\frac{ED}{\sin(B)}+\frac{DF}{\tan(C)} 
=
\frac{k \gamma}{\sin(B)} + \frac{k \alpha}{\tan(C)}
=
\frac{k \alpha \sin(C)}{\sin(B)} + \frac{k \alpha}{\tan(C)},
$$ 
where the last equality follows from the sin Law in triangle $ABC$. But then, solving for $k$ and simplifying the trigonometric expressions yields 
$
k=\frac{\sin(A)\sin(B)\sin(C)}{1+\cos(A)\cos(B)\cos(C)}.
$
It follows that the $1$-gap of the induced patrolling schedule is equal to the perimeter of triangle $DEF$ which equals $k$ times the perimeter of $ABC$ as claimed. 
 \end{proof}

We are now ready to prove Theorem~\ref{thm: greedy}. 
An immediate corollary of Lemma~\ref{lem: greedy converges} is that the (limiting) cost of the greedy algorithm is the same also for the corresponding counter-clock wise trajectory. Moreover, the ratio between its cost and the optimal $1$-gap, as per Lemma~\ref{lem: cost of orthic}, is given by 
\begin{align*}
& \frac{\sin(A)\sin(B)\sin(C)}{2(1+\cos(A)\cos(B)\cos(C))} 
\left(
\frac1{\sin(B)\sin(C)}
+
\frac1{\sin(A)\sin(C)}
+
\frac1{\sin(A)\sin(B)}
\right) \\
& =\frac{\sin (A)+\sin (B)+\sin (C)}{2(1+\cos(A)\cos(B)\cos(C))}. 
\end{align*}
Call the latter expression $f(A, B, C)$. We aim to find the extreme values of $f(A, B, C)$, subject to the constraints $A + B + C = \pi$ and $0 \leq A, B, C \leq \pi/2$, resulting in a nonlinear program. Due to the symmetry of the objective function and by the KKT conditions, if there exists an extreme point where none of the inequality constraints are satisfied tightly, we must have $A = B = C = \pi/3$. In this case, 
$f(\pi/3, \pi/3, \pi/3) = 2\sqrt{3}/3$.

Next, we investigate the values of the objective function at the boundary of its feasible region. Without loss of generality, we may assume that $A \leq B \leq C$. The constraint $C \leq \pi/2$ is either tight at the optimizer or not. 

If it is tight, the objective simplifies to 
$f(A, \pi/2 - A, \pi/2) = (1 + \cos(A) + \sin(A)) / 2$,
where $A \leq \pi/3$. Elementary calculus shows that this expression is maximized when $A = \pi/4$, yielding
$f(\pi/4, \pi/4, \pi/2) = (1 + \sqrt{2}) / 2$,
and minimized when $A = 0$, for which 
$f(0, \pi/2, \pi/2) = 1$.

Lastly, we examine the case where the constraint $C \leq \pi/2$ is not active. In this case, the constraint $A \geq 0$ must be active; otherwise, none of the inequality constraints are active, a scenario we have already considered. Here, the objective simplifies to 
$f(0, \pi - C, C) = 1 / \sin(C)$,
where $\pi/3 \leq C < \pi/2$. The infimum of this expression equals $1$, attained as $C \to \pi/2$, while its maximum equals 
$2\sqrt{3}/3$
attained at $C = \pi/3$.

Overall, we have shown that the ratio between the cost of the greedy algorithm and the optimal $1$-gap value (over all non-obtuse triangles) is maximized when one of the angles $A, B, C$ is a right angle, and the other two are equal, that is, in the case of the right isosceles triangle. In this case, the ratio becomes 
$\frac{1}{2} (1 + \sqrt{2})$.
We also showed that the ratio tends to $1$ as any of the angles tends to $0$ (causing the other two to approach $\pi/2$), and for the equilateral triangle, the ratio equals 
$2\sqrt{3}/3$.

\section{Discussion}

In this work we demonstrated the connection between billiard-type trajectories and optimal patrolling schedules in combinatorial optimization. Specifically, we introduced and solved the problem of patrolling the edges of an acute triangle using a unit-speed agent with the goal of minimizing the maximum 1-gap and 2-gap idle time of any edge. We show that billiard-type trajectories are optimal solution to these combinatorial patrolling problems. 

Our findings point to several future directions. One natural extension of our work is to generalize the patrolling problem to arbitrary polygons with one or more agents. Moreover, the introduction of the novel 2-gap patrolling problem suggests the investigation of optimal solutions for more complex frequency requirements or time restrictions, especially with the presence of multiple patrolling agents or multiple objects to be patrolled. 
In that direction, it would be interesting to examine how our results extend to patrolling 3 or more arbitrary line segments on the plane, as subsets of the edges of convex polygones with one or more agents.


\nocite{*}
\bibliographystyle{abbrvnat}
\bibliography{newbib-updated}
\label{sec:biblio}

\end{document}